%
%
%

\documentclass[graybox]{svmult}

\usepackage{graphicx}
\usepackage{epstopdf}
\usepackage{amsmath}
\usepackage{color}

\usepackage[latin1]{inputenc}
\usepackage{amsfonts}
\usepackage{tikz} 
\usetikzlibrary{matrix}
\usepackage[english]{babel}
\usepackage[latin1]{inputenc}
\usepackage[T1]{fontenc}
\usepackage{ae}

\usepackage{url}


\allowdisplaybreaks[4]



\usepackage{color}
\usepackage{tikz}
\usetikzlibrary{matrix}
\usepackage{cite}           
\usepackage{mathptmx}       
\usepackage{helvet}         
\usepackage{courier}        
\usepackage{type1cm}        
%
\usepackage{makeidx}         
\usepackage{graphicx}        
\usepackage{multicol}        
\usepackage[bottom]{footmisc}

\usepackage{amssymb,amsfonts,amscd,epsf}

\newtheorem{thm}{Theorem}
\newtheorem{lem}[thm]{Lemma}


\newcommand{\eq}[2]{\begin{equation}\label{#1}#2 \end{equation}}









\newcommand{\be}{\begin{equation}}
\newcommand{\ee}{\end{equation}}
\newcommand{\bea}{\begin{eqnarray}}
\newcommand{\eea}{\end{eqnarray}}
\newcommand{\beas}{\begin{eqnarray*}}
\newcommand{\eeas}{\end{eqnarray*}}

\def\One{\mathbb{I}}

\def\jewelb{{\;\raisebox{-20mm}{\includegraphics[height=66mm]{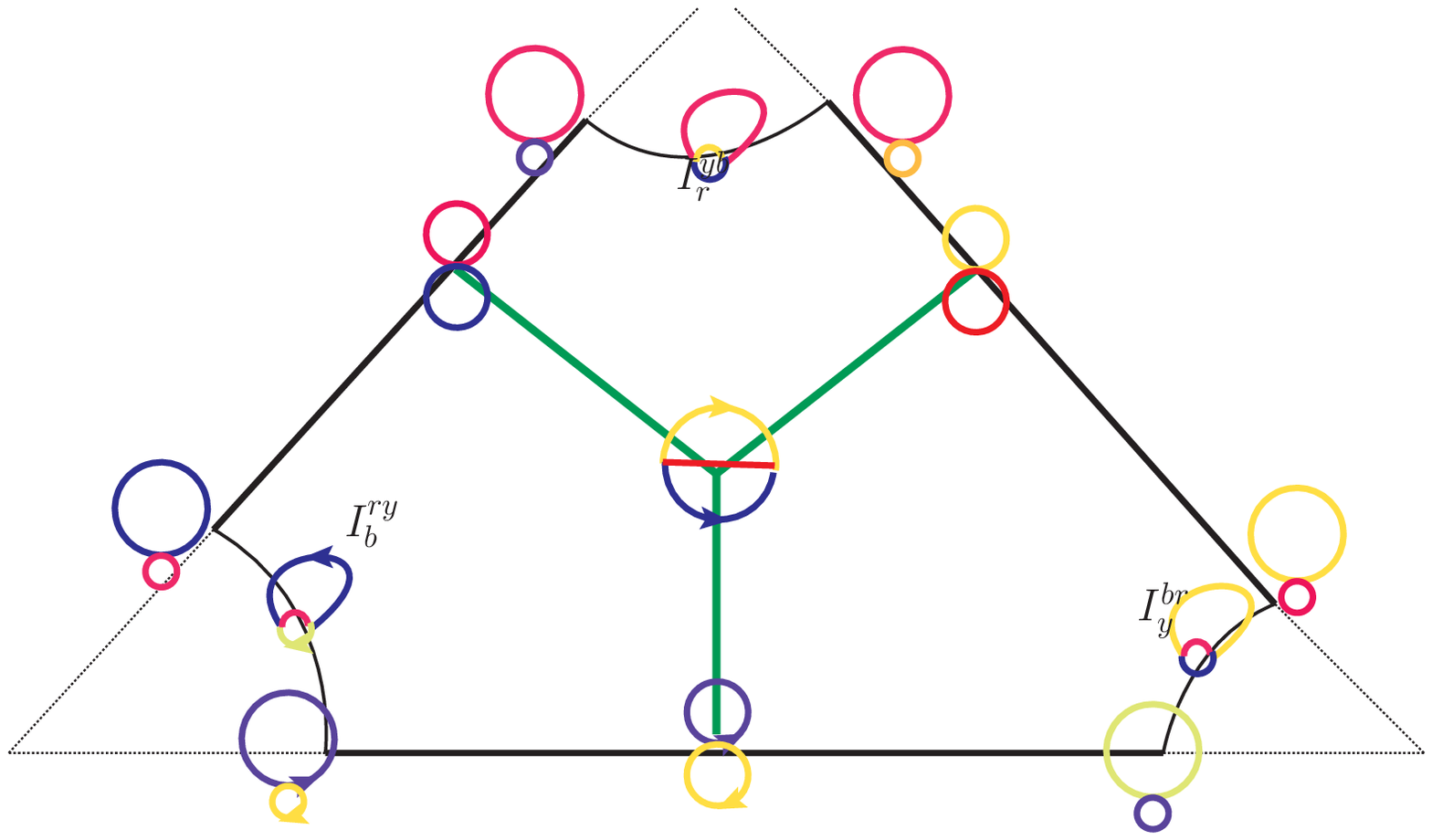}}\;}}

\def\jewelpart{{\;\raisebox{-20mm}{\includegraphics[height=90mm]{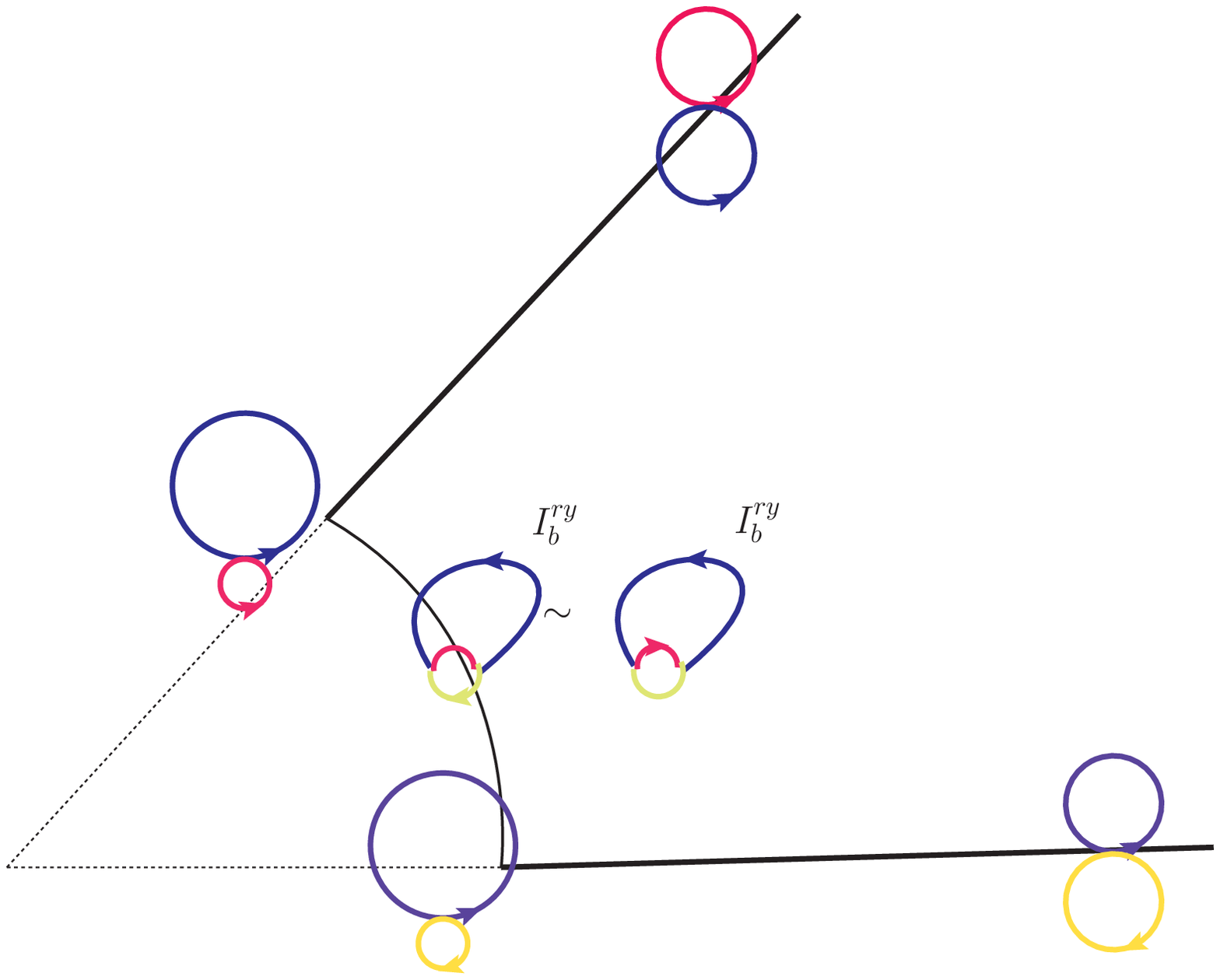}}\;}}
\def\btwo{{\;\raisebox{-10mm}{\includegraphics[height=20mm]{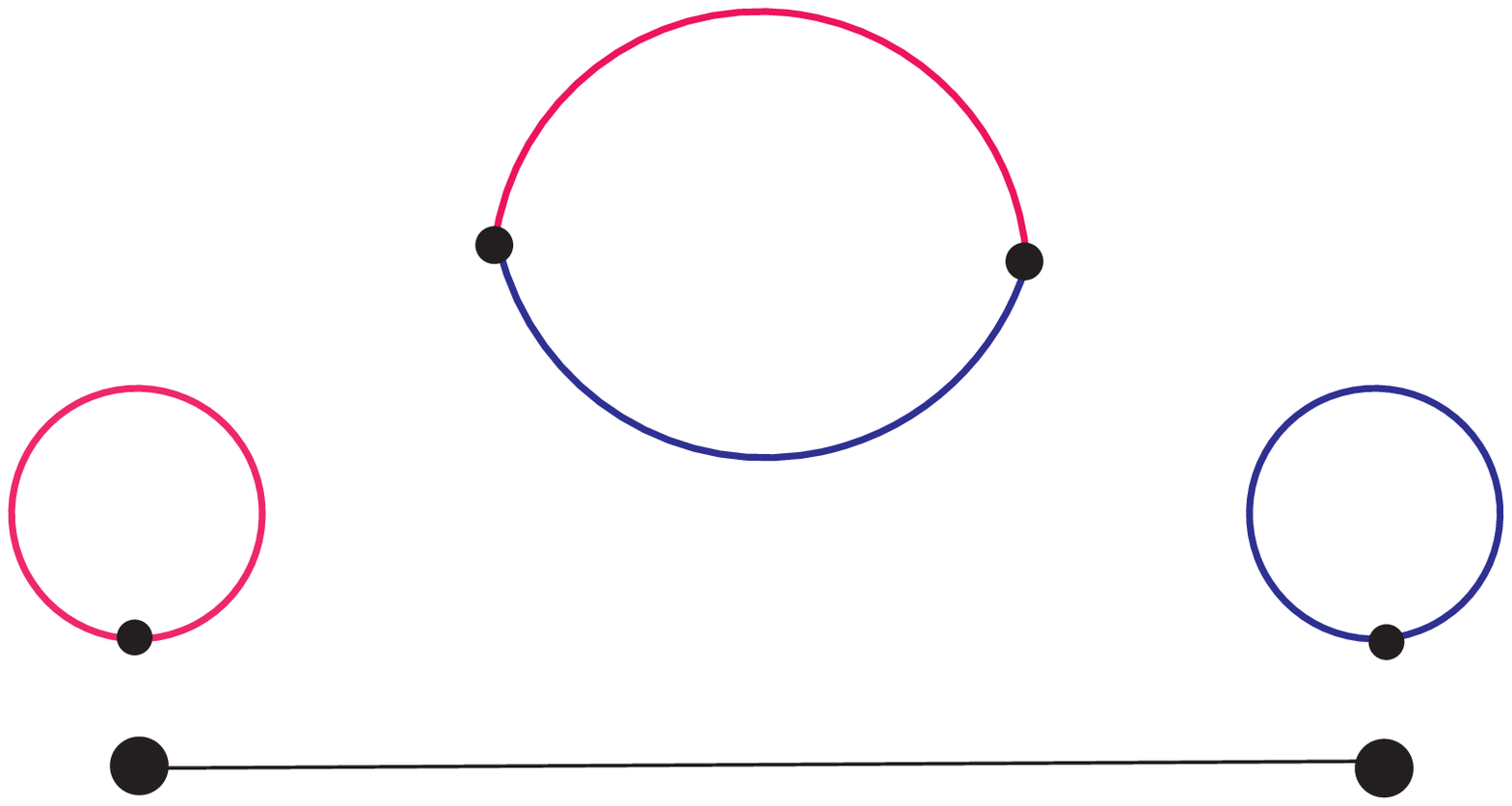}}\;}}
\def\bthree{{\;\raisebox{-10mm}{\includegraphics[height=20mm]{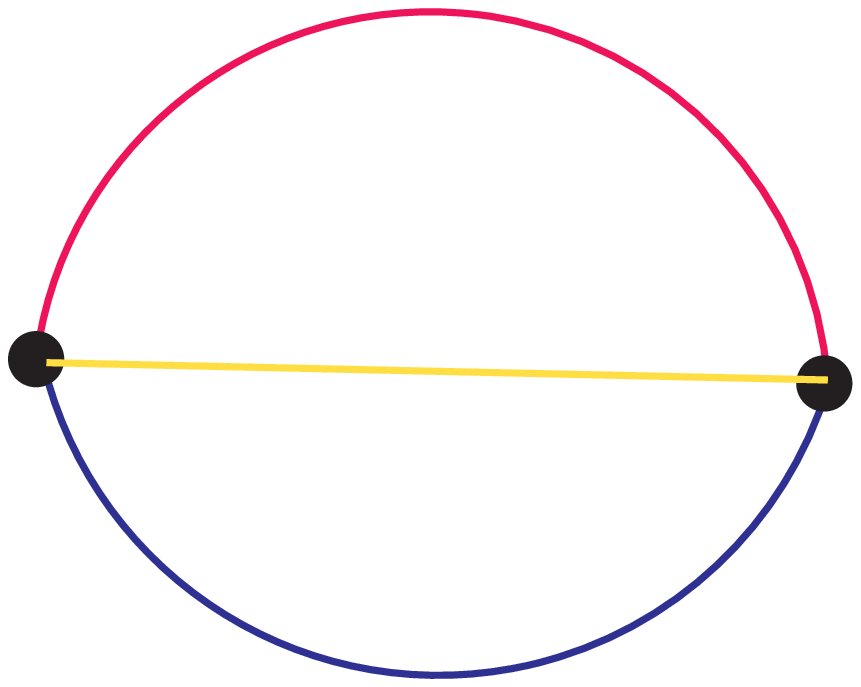}}\;}}
\def\bthreebyr{{\;\raisebox{-15mm}{\includegraphics[height=30mm]{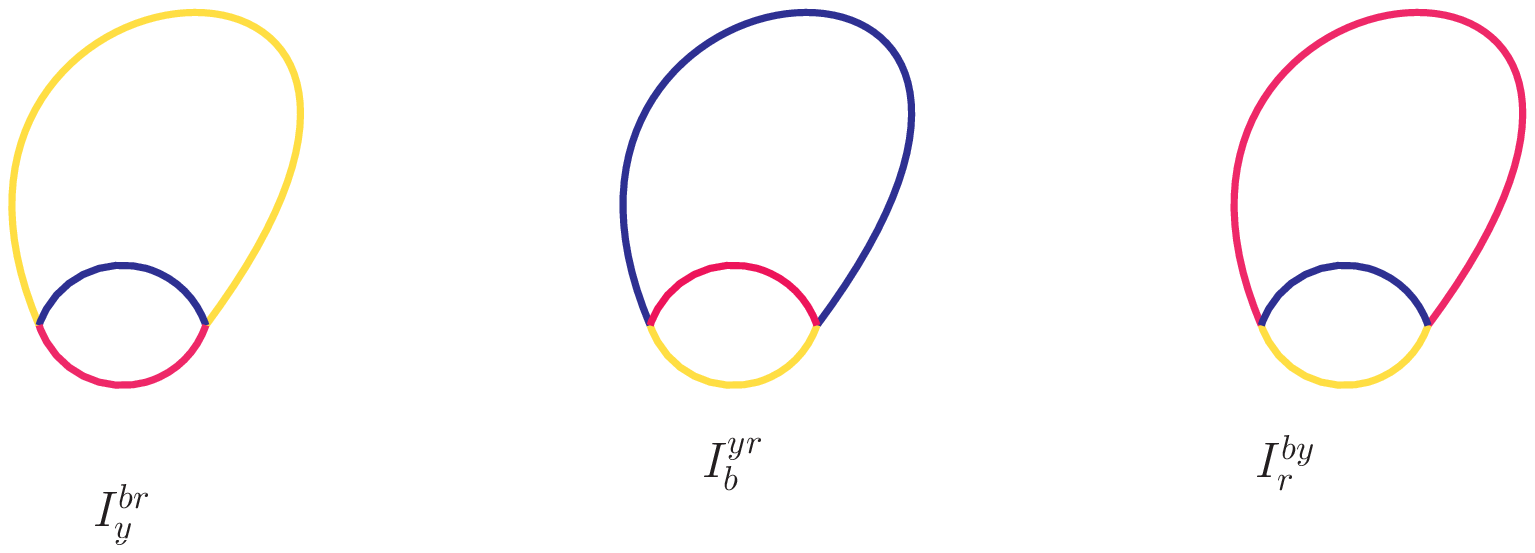}}\;}}
\def\jewelt{{\;\raisebox{-20mm}{\includegraphics[height=80mm]{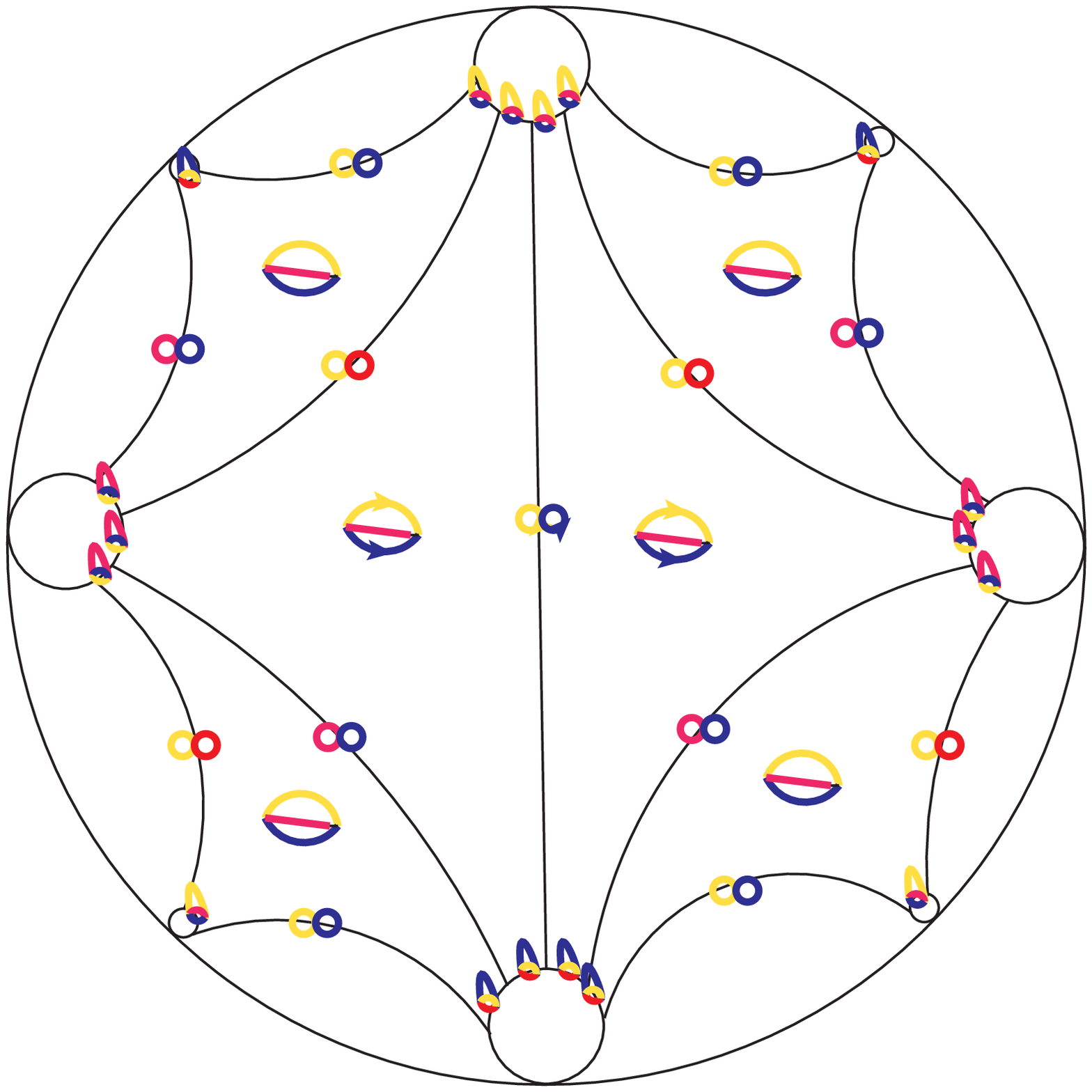}}\;}}
\def\by{{\;\raisebox{-4mm}{\includegraphics[height=10mm]{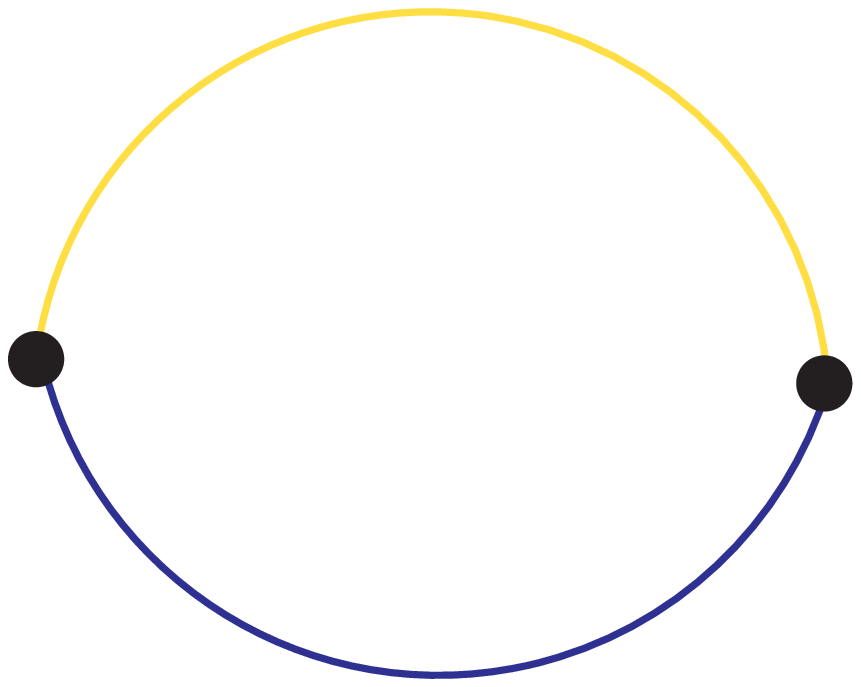}}\;}}
\def\yr{{\;\raisebox{-4mm}{\includegraphics[height=10mm]{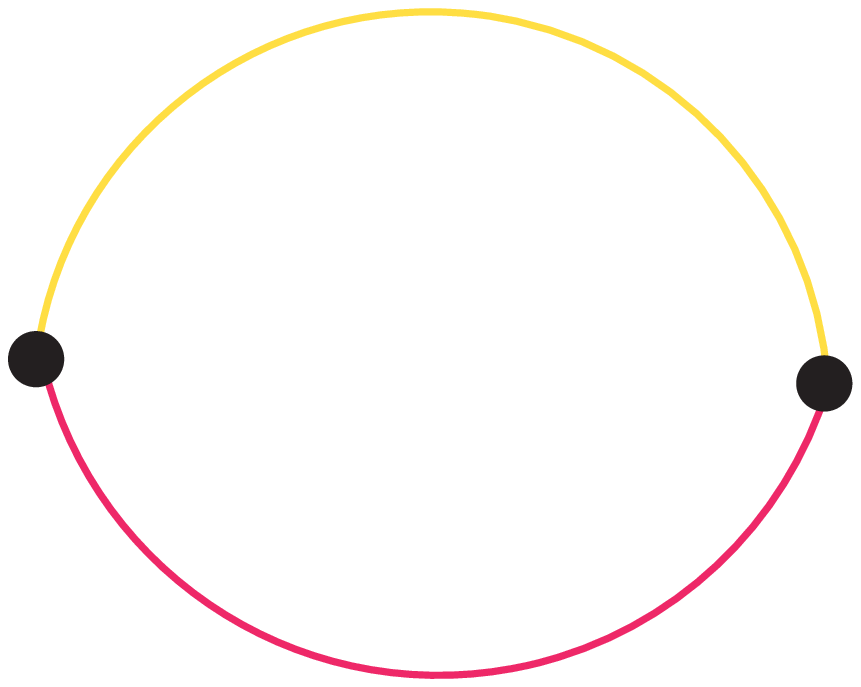}}\;}}
\def\rb{{\;\raisebox{-4mm}{\includegraphics[height=10mm]{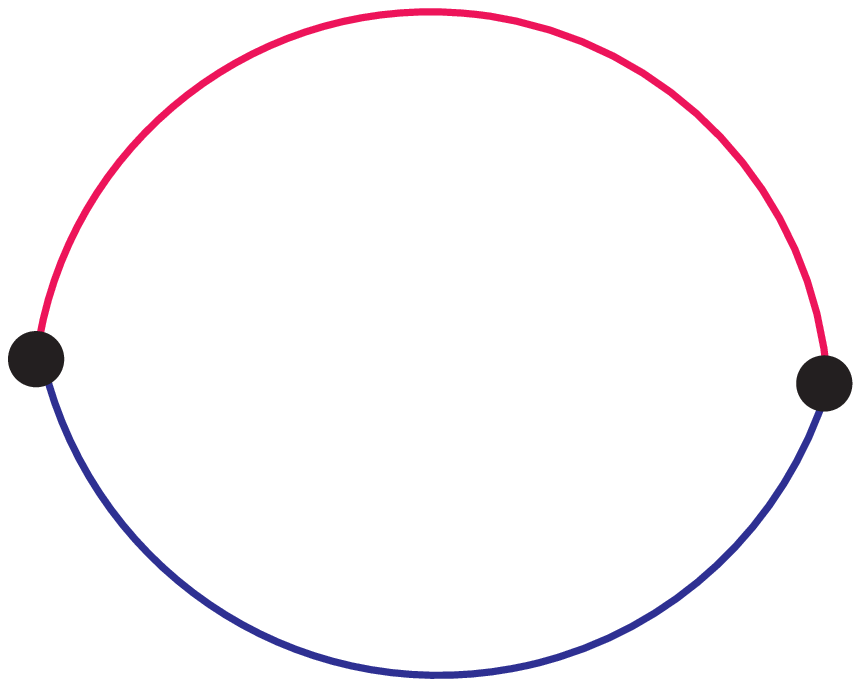}}\;}}
\def\rtp{{\;\raisebox{-4mm}{\includegraphics[height=10mm]{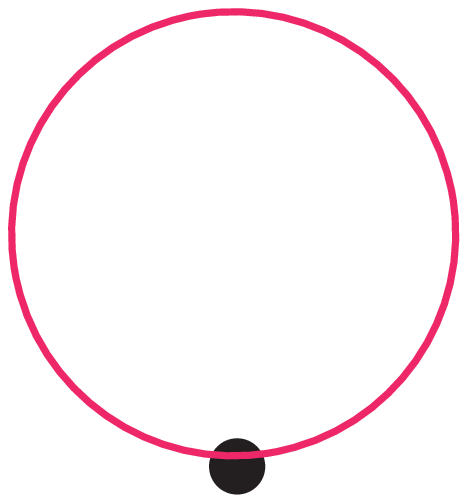}}\;}}
\def\btp{{\;\raisebox{-4mm}{\includegraphics[height=10mm]{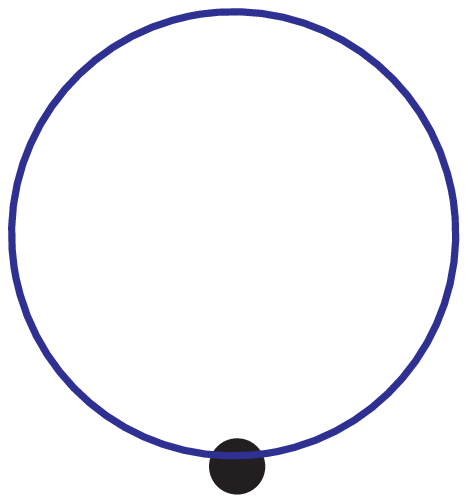}}\;}}
\def\ytp{{\;\raisebox{-4mm}{\includegraphics[height=10mm]{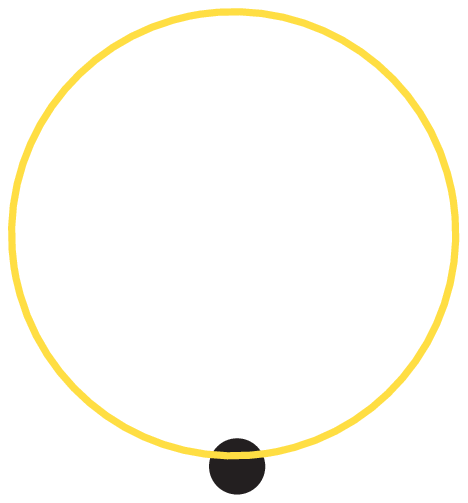}}\;}}
\def\bthreeeq{{\;\raisebox{-10mm}{\includegraphics[height=20mm]{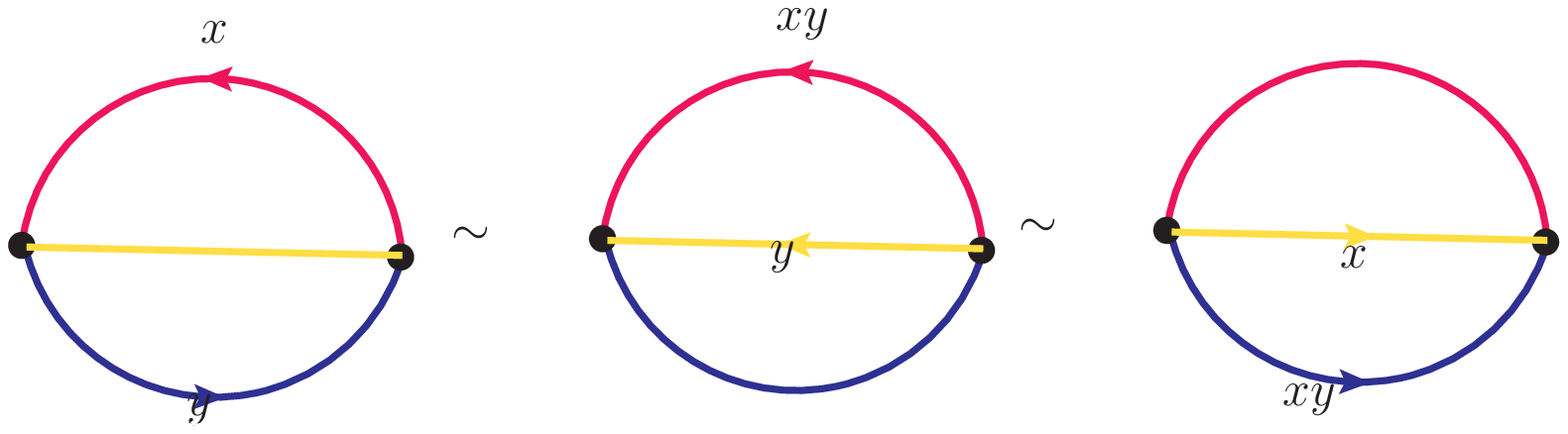}}\;}}

\def\TetraDunce{{\;\raisebox{-20mm}{\includegraphics[height=70mm]{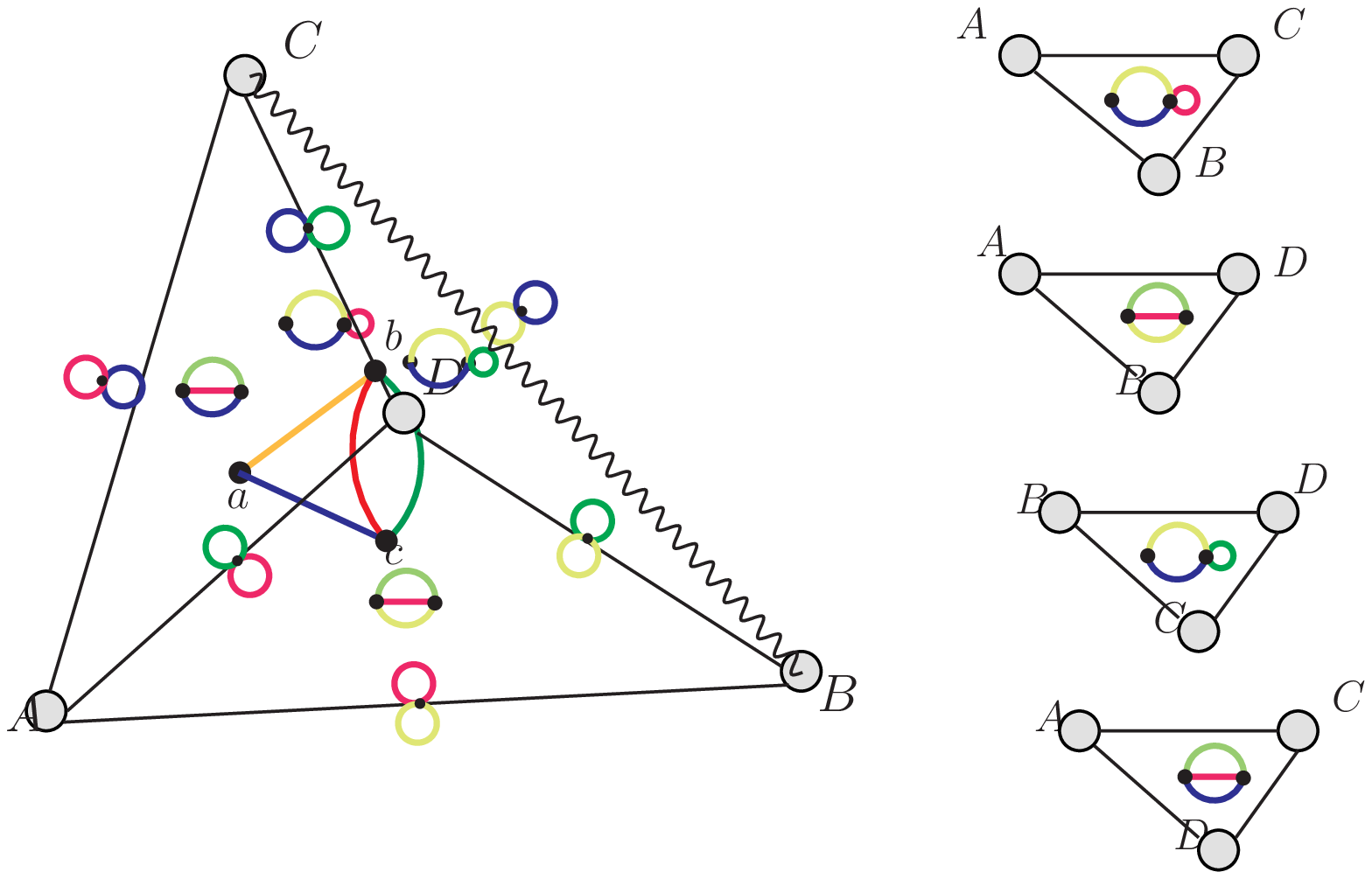}}\;}}

\def\duncetrees{{\;\raisebox{-10mm}{\includegraphics[height=24mm]{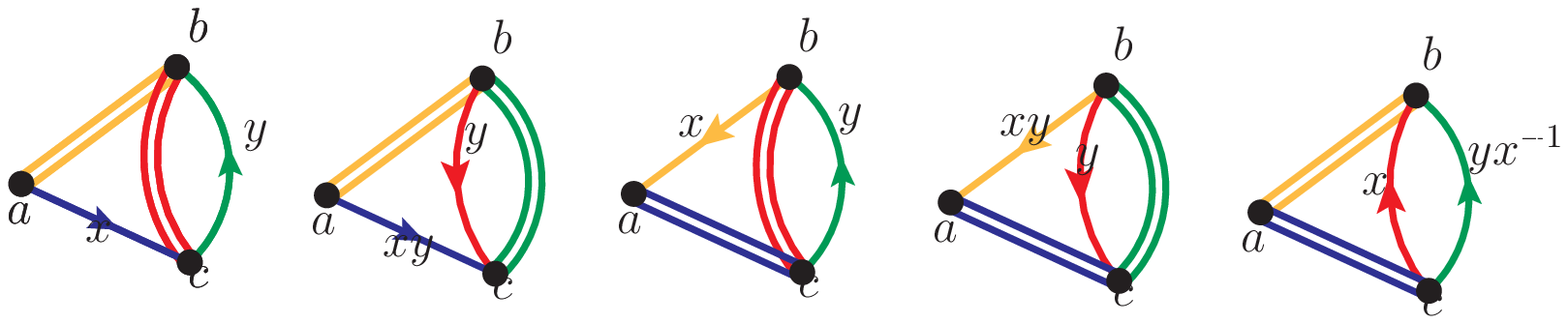}}\;}}

\def\TriangleOS{{\;\raisebox{-10mm}{\includegraphics[height=60mm]{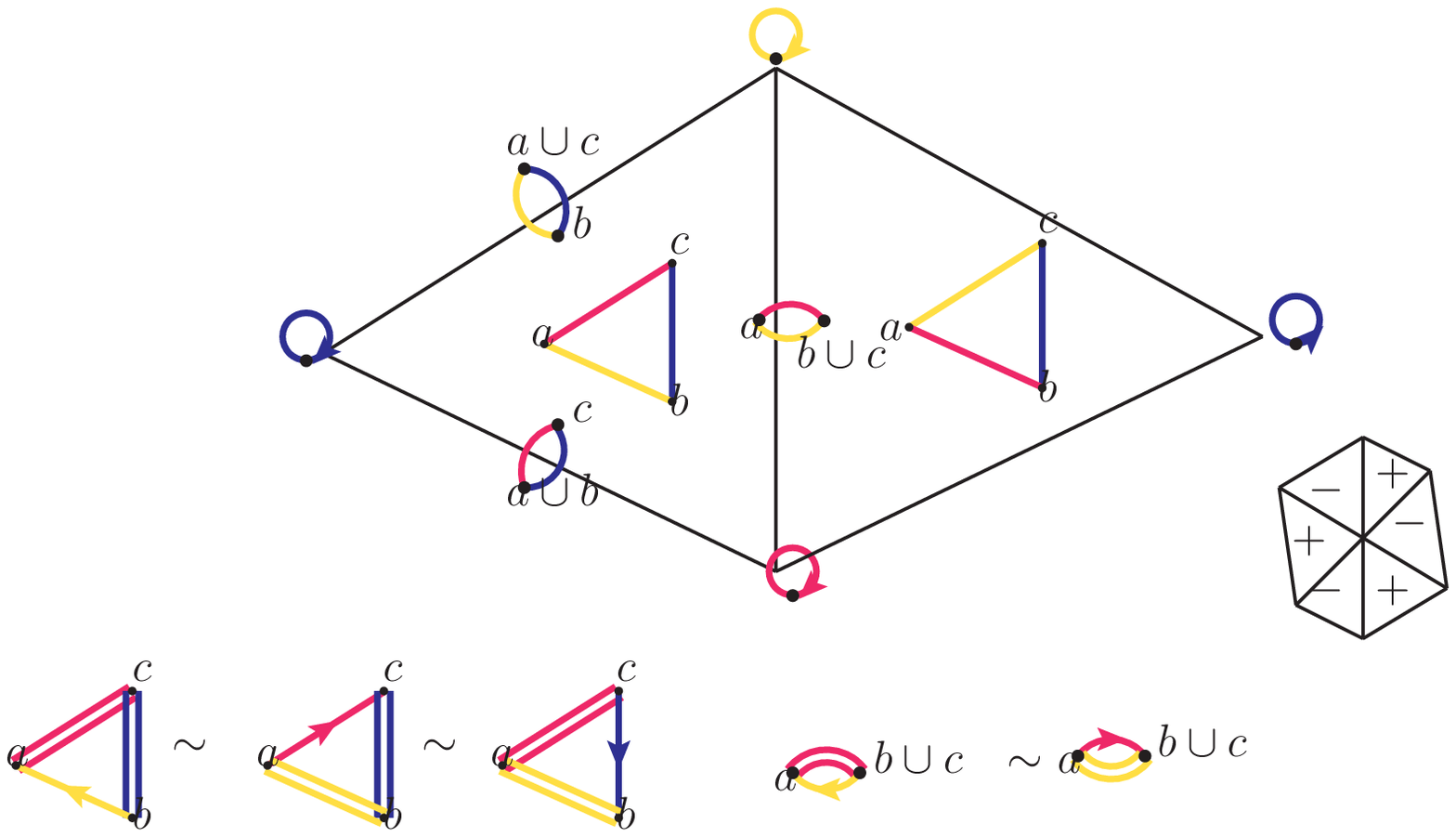}}\;}}
\def\triangle{{\;\raisebox{-1mm}{\includegraphics[height=18mm]{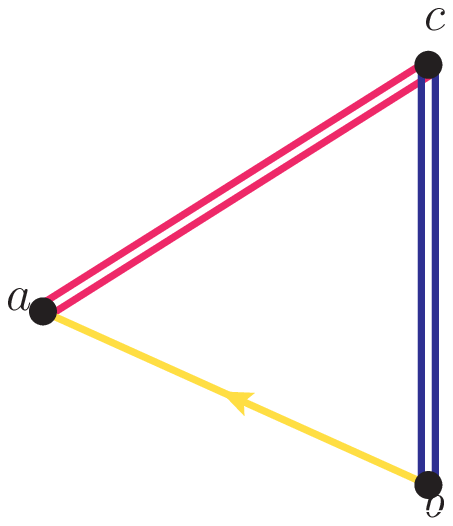}}\;}}
\def\trianglecut{{\;\raisebox{-1mm}{\includegraphics[height=18mm]{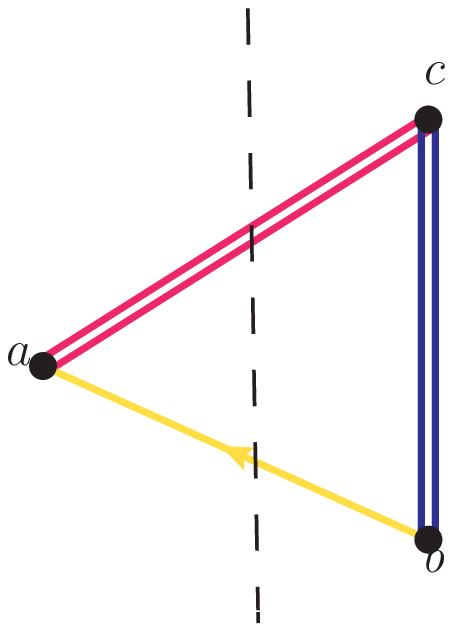}}\;}}
\def\trianglecutcut{{\;\raisebox{-1mm}{\includegraphics[height=18mm]{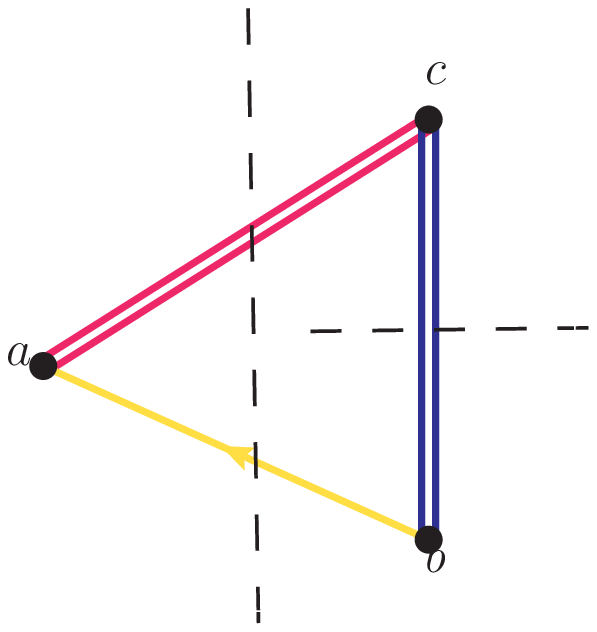}}\;}}
\def\bybubble{{\;\raisebox{-1mm}{\includegraphics[height=10mm]{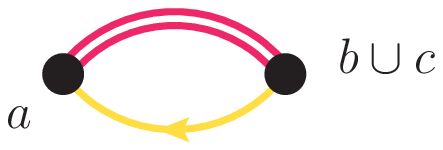}}\;}}
\def\bybubblecut{{\;\raisebox{-1mm}{\includegraphics[height=15mm]{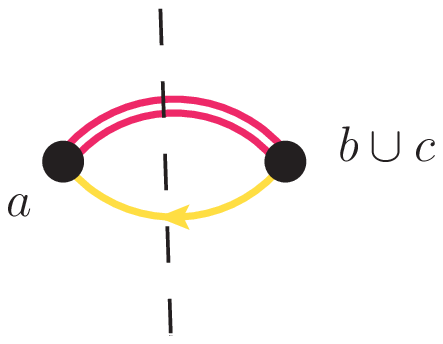}}\;}}
\def\ytad{{\;\raisebox{-1mm}{\includegraphics[height=10mm]{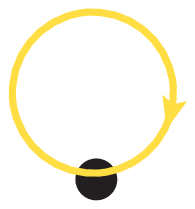}}\;}}
\def\copCCC{{\;\raisebox{-1mm}{\includegraphics[height=60mm]{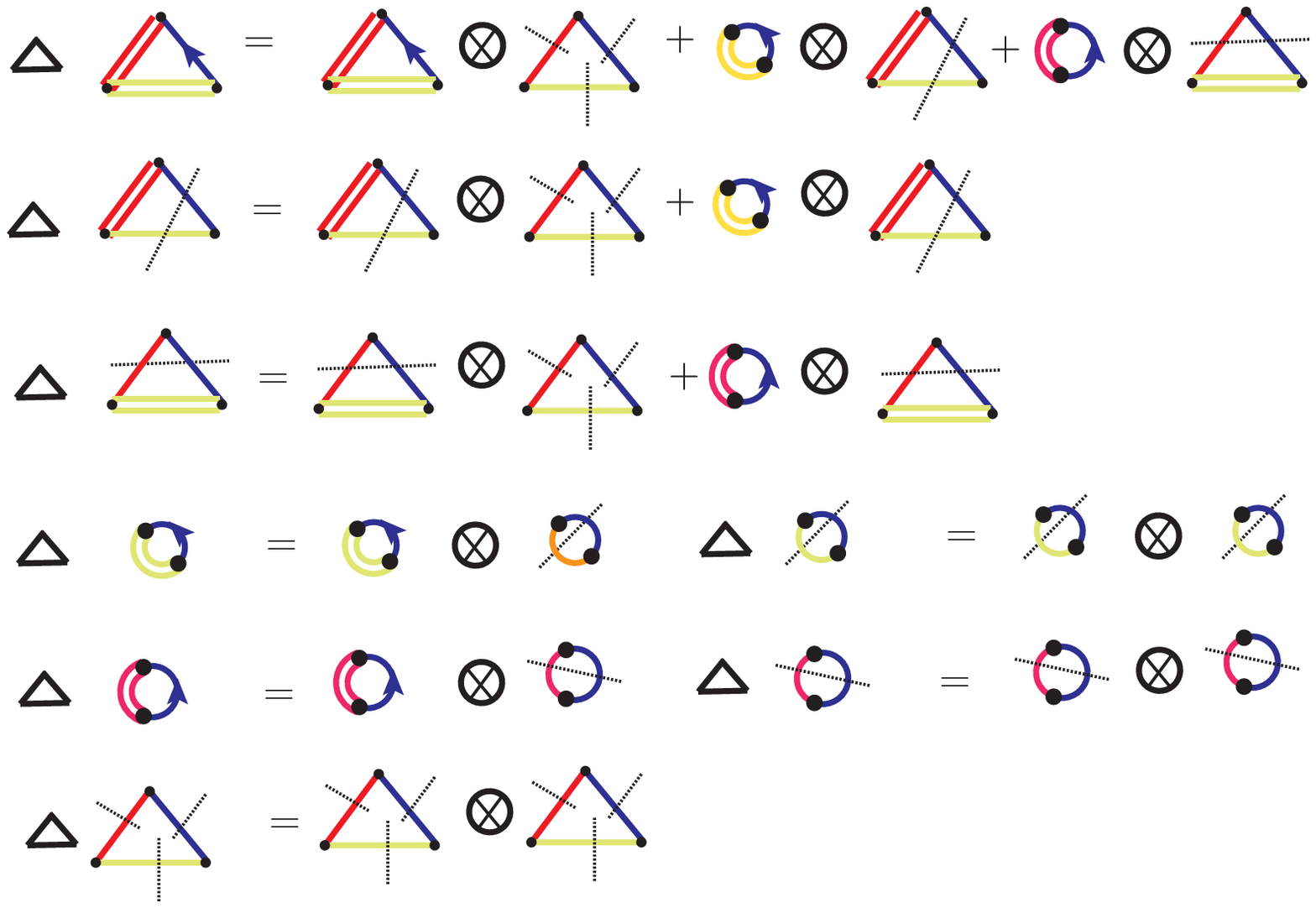}}\;}}
\def\CCCT{{\;\raisebox{-1mm}{\includegraphics[height=50mm]{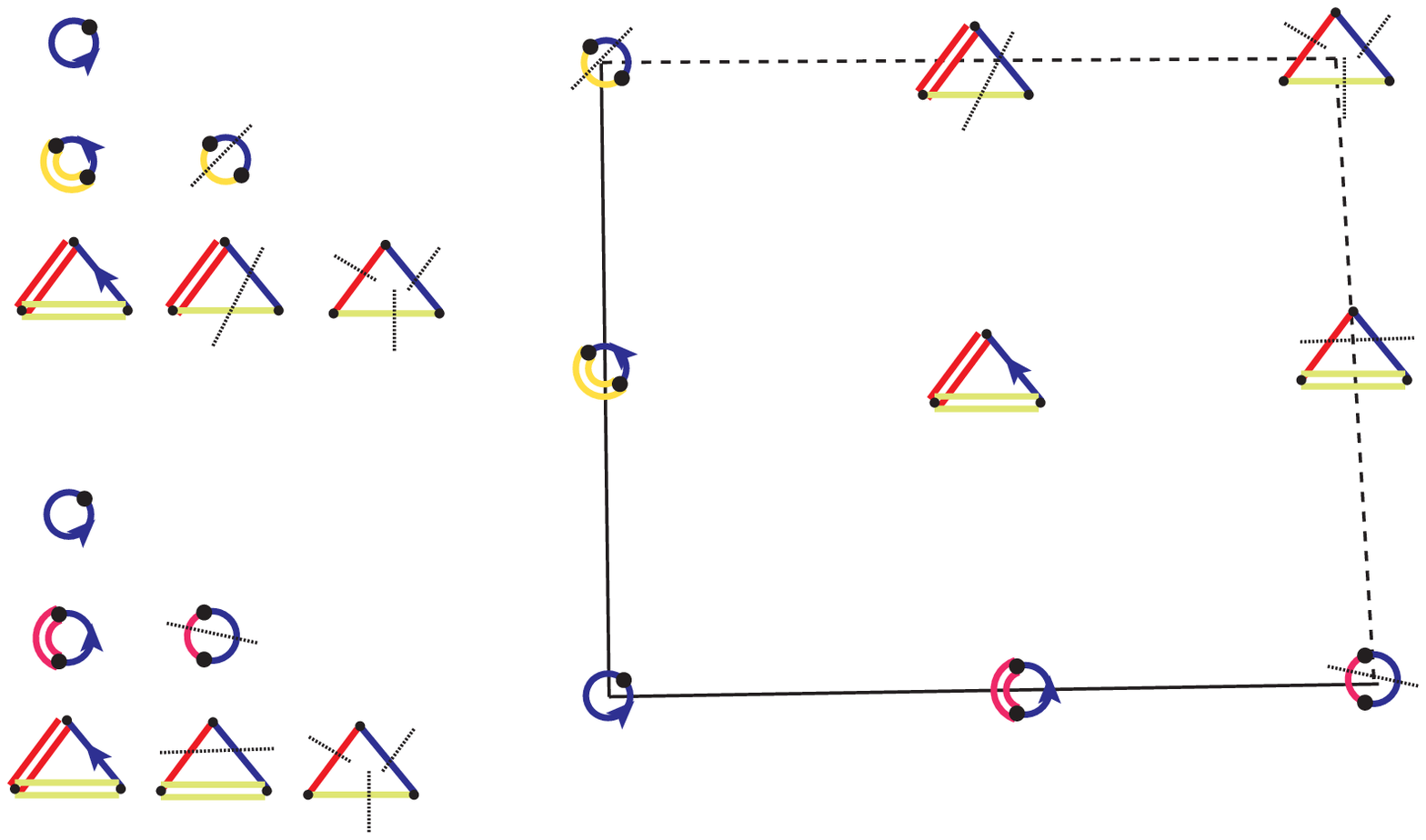}}\;}}

\def\OSCCCT{{\;\raisebox{-1mm}{\includegraphics[height=60mm]{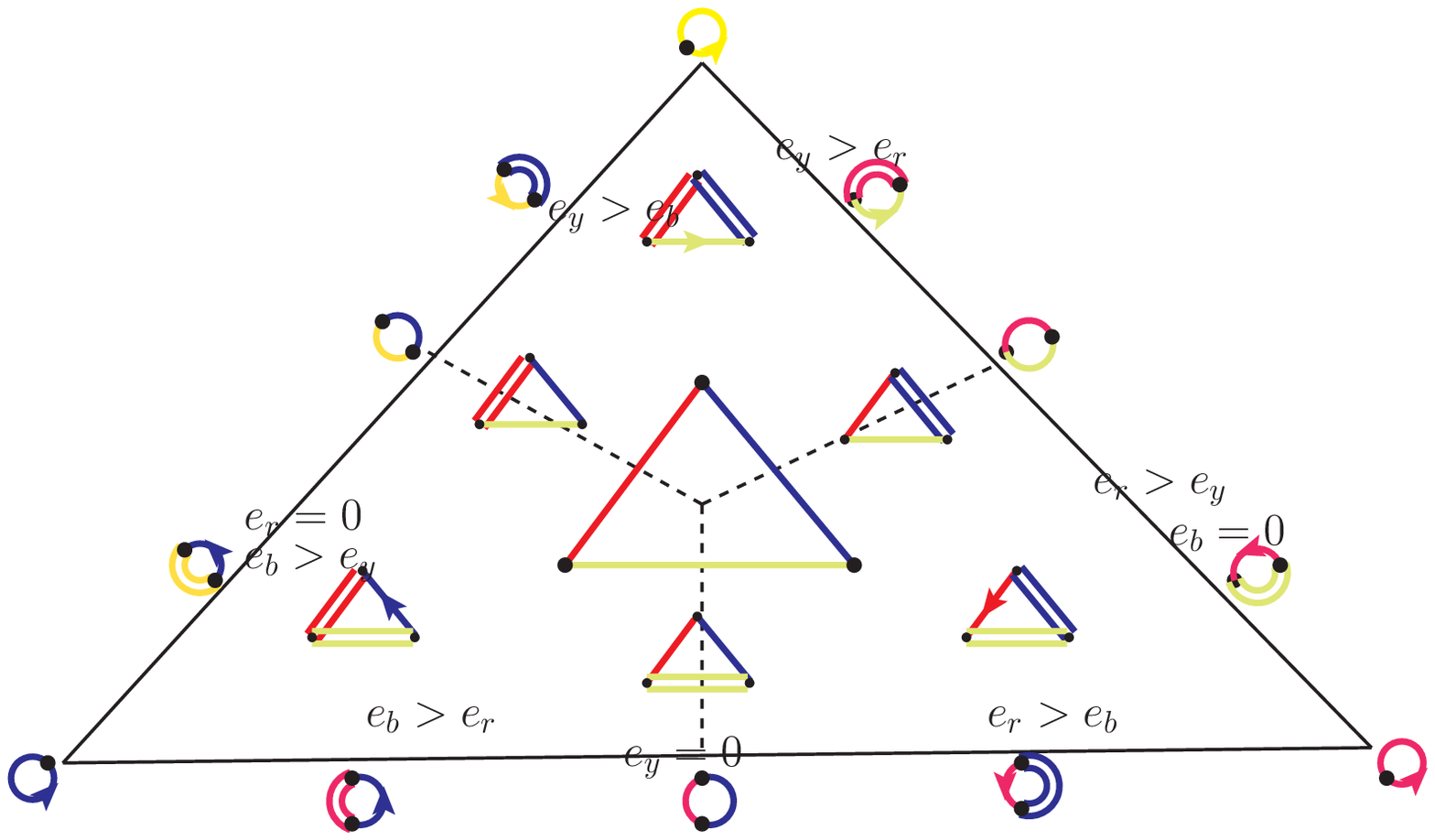}}\;}}


\makeindex             


\begin{document}
\title*{Multi-valued Feynman Graphs and Scattering Theory\thanks{to appear in the Proceedings of the conference {\em Elliptic Integrals, Elliptic Functions and Modular Forms in Quantum Field Theory}, DESY-Zeuthen, October 2017, Johannes Bl\"umlein et.al., eds.}}
\author{Dirk Kreimer}
\institute{Dirk Kreimer,\\  Institute for Mathematics and Institute for Physics,
Humboldt University, Unter den Linden 6, 10099 Berlin, Germany,
\email{kreimer@physik.hu-berlin.de}}
%
%
\maketitle

\abstract{We outline ideas to connect the analytic structure of Feynman amplitudes to the structure of Karen Vogtmann's and Marc Culler's  {\em Outer Space}. We focus on the role of cubical chain complexes in this context, and also investigate the bordification problem in the example of the 3-edge banana graph.}

%
%

\section{Motivation and Introduction}
\label{sec:1}
\vspace{1mm}\noindent
This is a write-up of two talks given recently in Zeuthen and in Les Houches. It contains results and ideas which are partially published and which will be elaborated on in future work.
 
We want to establish a conceptual relation between scattering theory for Feynman amplitudes (see for example \cite{ErikDiss} and references there for an introduction) and the structure of suitable Outer Spaces, motivated by  \cite{KV1,KV2,KV3,KV4}. 

In particular we want to incorporate in the analysis the structure of amplitudes as multi-valued functions. We use this term for the study of functions defined by the evaluation of a Feynman graph $\Gamma\in H$ by renormalized Feynman rules, \[\Phi_R:H\to \mathbb{C}.\]
Here, $H$ is a suitable Hopf algebra of Feynman graphs. 

$\Phi_R(\Gamma),\,\forall \Gamma\in H$
depends on kinematics: $p\in Q_\Gamma$, $\Phi_R(\Gamma)=\Phi_R(\Gamma)(p)$,
where $Q_\Gamma$ is a real vectorspace \cite{BlKrCut}, 
\[ Q_\Gamma\sim\mathbb{R}^{v_\Gamma(v_\Gamma-1)/2+e_\Gamma},\] 
generated by all scalar products of external momenta, and internal masses of $\Gamma$. Here, $v_\Gamma, e_\Gamma$ are the number of vertices and edges of $\Gamma$.
We assign an external momentum to each vertex of the graph. They can be set to zero when needed.

The analysis will proceed by regarding such an amplitude $\Phi_R(\Gamma)$ as an iterated integral. What is to be iterated are not differential one-forms though, as in the example of the study of generalized polylogarithms, but elementary amplitudes $\Phi_R(\gamma)$ built from particularly simple Feynman graphs $\gamma$: one-loop graphs which form the basis primitive elements of the core Hopf algebra $H_{\mathrm{core}}$ defined below.  

$\Phi_R(\Gamma)$ is not uniquely defined as an iterated integral as there are many distinct flags \cite{BlKr1,BlKr2} which describe possible sequences of iteration to obtain $\Phi_R(\Gamma)$. 

We will remedy this below  by defining a suitable equivalence relation using equality along principal sheets. The latter equality reflects Fubini's theorem in the context of renormalized amplitudes.

Our claim is that this iteration gives $\Phi_R(\Gamma)$ -the evaluation of $\Gamma$ by renormalized Feynman rules- a structure which reflects the structure of a suitable Outer Space built on graphs.
Here, the graphs are metric marked graphs with colored edges, and without mono- or bi-valent vertices.

The full amplitude contributing at a given loop order is obtained by summing graphs which all have the same loop order and the same number of external edges. Suitably interpreted, the full amplitude is obtained as an integral over all cells of Outer Space, in a piecewise linear manner as exhibited in \cite{Marko}.

Such Outer Spaces are used in mathematics to study, amongst many things, the representation theory of the free group $F_{n_\Gamma}$.
In the course of such studies graph complexes arise which have bearing in their own right in the investigation of $\Phi_R(\Gamma)$. This includes 
\begin{itemize}
\item Outer Space itself as a cell-complex with a corresponding spine and partial order defined from shrinking edges \cite{KV1};
\item a cubical chain complex resulting from a boundary $d$ which acts on pairs $(\Gamma,F)$, $F$ a spanning forest of $\Gamma$ \cite{KV2};  
\item a bordification which blows up missing cells at infinity \cite{KV3}.
\end{itemize}
The use of metric graphs suggests itself in the study of amplitudes upon using the parametric representation: the parametric integral is then the integral over the volume of the open simplex $\sigma_\Gamma$, the cell  assigned to each graph $\Gamma$ in Outer Space, which itself is a union of such cells. 

Colored edges reflect the possibility of different masses in the propagators assigned to edges. External edges are not drawn in the coming pictures. Momentum conservation allows us to incorporate them by connecting external vertices to a distinguished vertex $v_\infty$. We come back to this elsewhere.
\section{$b_2$: the bubble}
\label{sec:2}
\vspace{1mm}\noindent
We first discuss the elementary monodromy of the simplest one-loop graph\footnote{This material is standard for physicists. It is included here for the benefit of mathematiciams who are usually not exposed to the monodromy of Feynman amplitudes.}.
So we start with the 2-edge banana $b_2$, a bubble on two edges with two different internal masses $m_b,m_r$, indicated by two different colors:
$$\btwo$$
The incoming external momenta at the two vertices of $b_2$  are $q,-q$. 

We assign to $b_2$ a one-dimensional cell, an open line segment, and glue in  its two boundary endpoints, to which the two tadpoles on the two different masses are assigned,
obtained by either shrinking the blue or red edge. The vertex at each tadpole is then 4-valent, with no external momentum flow through the graph.
 
The fundamental group \[\Pi_1(b_2)\sim\mathbb
Z\] of $b_2$ has a single generator. This matches with the monodromy of the function $\Phi_R(b_2)$ as we see in a moment.

Indeed, the Feynman integral we consider is coming from renormalized Feynman rules $\Phi_R(b_2)$, where we implement a kinematic renormalization scheme by subtraction at $\mu^2<(m_b-m_r)^2$ (so that the subtracted terms does not have an imginary part, as $\mu^2$ is even below the pseudo threshold):
\[
\Phi_R(b_2)=\int d^4k \left( \frac{1}{\underbrace{k^2-m_r^2}_{Q_1}}\frac{1}{\underbrace{(k+q)^2-m_b^2}_{Q_2}
}- \{q^2\to \mu^2\}\right).
\]
We set $s:=q^2$ and demand $s>0$, and also set $s_0:=\mu^2$. 

We write $k=(k_0,\vec{k})^T$, $t:=\vec{k}\cdot \vec{k}$. As  the 4-vector $q$ is assumed time-like (as $s>0$) we can work in a coordinate system where  $q=(q_0,0,0,0)^T$ and get 
\[
\Phi_R(b_2)=4\pi \int_{-\infty}^\infty dk_0\int_0^\infty \sqrt{t}dt \left( \frac{1}{k_0^2-t-m_r^2}\frac{1}{(k_0+q_0)^2-t-m_b^2}- \{s\to s_0\}\right).
\]
We define the K\.{a}llen function 
\[\lambda(a,b,c):=a^2+b^2+c^2-2(ab+bc+ca),\] and find by explicit integration
\beas
& & \Phi_R(b_2)(s,s_0;m_r^2,m_b^2) = \\
& &  = \left(\underbrace{ \frac{\sqrt{\lambda(s,m_r^2,m_b^2)}}{2s}\ln\frac{m_r^2+m_b^2-s-\sqrt{\lambda(s,m_r^2,m_b^2)}}{m_r^2+m_b^2-s+\sqrt{\lambda(s,m_r^2,m_b^2)}} - \frac{m_r^2-m_b^2}{2s}\ln\frac{m_r^2}{m_b^2}}_{B_2(s)}\right. \\ & & \left. -\underbrace{\{s\to s_0\}}_{B_2(s_0)}\right). 
\eeas
The principal sheet of the above logarithm is  real for $s\leq (m_r+m_b)^2$ and free of singularities at $s=0$ and $s=(m_r-m_b)^2$. It has a branch cut for $s\geq (m_r+m_b)^2$.

The threshold divisor defined by the intersection $Q_1\cap Q_2$ where the zero locii of the quadrics meet is at $s=(m_b+m_r)^2$. This is an elementary example of the application of Picard--Lefshetz theory \cite{Hwa,BlKrCut}.

Off the principal sheet, we have a pole at $s=0$ and a further branch cut for $s\leq (m_r-m_b)^2$.

It is particularly interesting to compute the variation using Cutkosky's theorem \cite{BlKrCut}
\[
\mathrm{Var}(\Phi_R(b_2))=4\pi\int_0^\infty \sqrt{z}dz \int_{-\infty}^\infty dk_0 
\delta_+(k_0^2-t-m_r^2)\delta_+((k_0-q_0)^2-t-m_b^2).
\]
Integrating $k_0$ using $k_0+q_0>0$ and $k_0=-q_0+\sqrt{t+m_b^2}$, delivers 
\[
\mathrm{Var}(\Phi_R(b_2))=\int_0^\infty \sqrt{t}dt 
\delta((q_0-\sqrt{t+m_b^2})^2-t-m_r^2)\frac{1}{\sqrt{t+m_b^2}}.
\]
With 
\[\left(q_0-\sqrt{t+m_b^2}\right)^2-t-m_r^2=s-2\sqrt{s}\sqrt{t+m_b^2}+m_b^2-m_r^2,\] 
we have from the remaining $\delta$-function,
\[
0\leq t =\frac{\lambda(s,m_r^2,m_b^2)}{4s},
\]
whenever the K\.{a}llen $\lambda(s,m_r^2,m_b^2)$ function is positive.

The integral gives
\[
\mathrm{Var}(\Phi_R(b_2))(s,m_r^2,m_b^2)=\overbrace{\left( \frac{\sqrt{\lambda(s,m_r^2,m_b^2)}}{2s}\right) }^{=:V_{rb}(s;m_r^2,m_b^2)}\times \Theta(s-(m_r+m_b)^2).
\]
We emphasize that $V_{rb}$ has a pole at $s=0$ with residue $|m_r^2-m_b^2|/2$ and 
note  $\lambda(s,m_r^2,m_b^2)=(s-(m_r+m_b)^2)(s-(m_r-m_b)^2)$.

We regain $\Phi_R(b_2)$ from $\mathrm{Var}(\Phi_R(b_2))$ by a subtracted dispersion integral:
\[
\Phi_R(b_2)=\frac{s-s_0}{\pi}\int_0^\infty \frac{\mathrm{Var}(\Phi_R(b_2)(x))}{(x-s)(x-s_0)}dx.
\]

Below we will also study the contributions of non-principal sheets and relate them to the bordification of Outer Space. 

In preparation we note that non-principal sheets give a contribution  
\[ 2j\pi \imath V_{rb}(s),\,  j\in\mathbb{Z^\times},\] where $\imath$ is the imaginary unit, $\imath^2=-1$.

We hence define a multi-valued function  
\[
\Phi_R(b_2)^\mathrm{mv}(s,m_r^2,m_b^2):=\Phi_R(b_2)(s,m_r^2,m_b^2)+2\pi\imath \mathbb{Z}V_{rb}(s).
\]
Sometimes it is convenient to write this as 
\[
\Phi_R(b_2)^\mathrm{mv}(s,m_r^2,m_b^2):=\Phi_R(b_2)(s,m_r^2,m_b^2)+2\pi\imath \mathbb{Z}\left(J_1^{rb}(s)+J_2^{rb}(s)+J_3^{rb}(s) \right),
\]
with 
\bea
J_1^{rb}(s) & := & V_{rb}(s)\Theta((m_r-m_b)^2-s),\label{jone}\\
J_2^{rb}(s) & := & V_{rb}(s)\Theta(s-(m_r-m_b)^2)\Theta((m_r+m_b)^2-s),\label{jtwo}\\
J_3^{rb}(s) & := & V_{rb}(s)\Theta(s-(m_r+m_b)^2).\label{jthree}
\eea
From the definition of the K\.{a}llen function, we conclude:\\
 $J_1^{rb}\in \mathbb{R}_+$ is positive real,\\ $J_3^{rb}(s)\in \mathbb{R}_+$ likewise  and\\ $J_2^{rb}(s)\in \imath \mathbb{R}_+$ is positive imaginary.
\section{The pole at $s=0$}
\label{sec:3}
\vspace{1mm}\noindent
In the above, we saw already a pole in $s$ appear for the evaluation along non-principal sheets for the amplitudes coming from the graph $b_2$. Actually, such poles are a general phenomenon as we want to exhibit now before we discuss the relation between the sheet structure of amplitudes and the structure of graph complexes apparent in colored variants of Outer Spaces.

We proceed using the parametric representation of amplitudes through graph polynomials as for example given in \cite{BrownKr}.

Let 
\[
\Phi_\Gamma=\phi_\Gamma+A\cdot M\psi_\Gamma,
\]
be the second Symanzik polynomial with masses 
and consider the amplitude
\[
A_\Gamma:=\int_{\mathbb{P}_\Gamma}\frac{\ln{\frac{\Phi_\Gamma}{\Phi^0_\Gamma}}}{\psi_\Gamma^2}\Omega_\Gamma.
\]
As $\psi_\Gamma$ and $\Phi^0_\Gamma$ are both strictly positive  in the domain of integration (the latter by choice of a renormalization scheme which subtracts at a kinematic point $p_0\in Q_\Gamma$ below all thresholds), we conclude
\[
\Im{A_\Gamma}:=\int_{\mathbb{P}_\Gamma}\frac{\Theta(\Phi_\Gamma)}{\psi_\Gamma^2}\Omega_\Gamma.
\]
Let $d\!A^1_\Gamma$  be the affine  measure setting $A_1=1$, and let $e\not=e_1$ be an edge $e\in E_\Gamma$, and $\mathbb{A}^1$ be the corresponding positive hypercube. 

We have $\psi_\Gamma=A_e\psi_{\gamma-e}+\psi_ {\Gamma/e}$. Then,
\[
\Im{A_\Gamma}=\int_{\mathbb{A}^1}\frac{\partial_{A_e}\Theta(\Phi_\Gamma)}{\psi_\Gamma\psi_{\Gamma-e}}d\!A^1_\Gamma+\mathrm{boundary},
\]
where we note $\partial_{A_e}\Theta(\Phi_\Gamma)=-\delta(\Phi_\Gamma)\partial_{A_e}(\Phi_\Gamma)$.
With $\Phi_\Gamma$ being a quadratic polynomial in $A_e$,
\[
\Phi_\Gamma=ZA_e^2-YA_e-X=Z\left(A_e+\frac{\tilde{Y}}{2}+\frac{1}{2}\sqrt{\tilde{Y}^2-4\tilde{X}}\right)\left(A_e+\frac{\tilde{Y}}{2}-\frac{1}{2}\sqrt{\tilde{Y}^2-4\tilde{X}}\right),
\]
we set 
\[
A_e^{\pm}:=-\frac{\tilde{Y}}{2}\pm\frac{1}{2}\sqrt{\tilde{Y}^2-4\tilde{X}}.
\]
As $\delta(f(x))=\sum_{\{x_0|f(x_0)=0\}}\frac{1}{|f^\prime(x_0)|}\delta(x-x_0)$, we get
\[
\Im A_\Gamma=\int_{\mathbb{A}^1_e}\sum_ {\pm}\frac{-\partial_{A_e}(\Phi_\Gamma)(A_e^\pm)}{|-\partial_{A_e}(\Phi_\Gamma)(A_e^\pm)|}\frac{1}{(\psi_\Gamma\psi_{\Gamma-e})(A_e^\pm)}.
\]
At $A_e=A_e^\pm$, we have $\Phi_\Gamma=0$, and therefore 
$\phi_\Gamma=A\cdot M\psi_\Gamma$, or
\[
\psi_\Gamma=\frac{\phi_\Gamma}{A\cdot M},
\]
hence
\[
\frac{1}{\psi_\Gamma\psi_{\Gamma-e}}=\frac{1}{\psi_{\Gamma-e}}\frac{A\cdot M}{\phi_\Gamma}.
\]
For a two-point function associated to a two-point graph $\Gamma$ we have  $\phi_\Gamma=s\psi_{\Gamma_\bullet}$. Here $\Gamma_\bullet$ is the graph where the two external edges of $\Gamma$ are identified. We conclude
\[
\Im A_\Gamma\sim \frac{1}{s}.
\]
For a $n$-point function, regarded as a function of a kinematical scale $s$ and angles $\vartheta_{ij}=q_i\cdot q_j/s$ \cite{BrownKr}, we find similarly\footnote{This explains the Omn\`es factor $1/\sqrt{\lambda(q_1^2,q_2^2,q_3^2)}$ in the computation of the anomalous threshold of the triangle graph.}
\[
\frac{1}{\psi_\Gamma\psi_{\Gamma-e}}=\frac{1}{s}\frac{1}{\psi_{\Gamma-e}}\frac{A\cdot M}{\phi_\Gamma(\{\vartheta_{ij}\})}.
\]

An immediate calculation gives that the boundary term remaining from the above, 
\[
\int_{\mathbb{P}_{\Gamma/e}}\frac{\Theta(\Phi_ {\Gamma/e})}{\psi_{\Gamma/e}\psi_{\Gamma-e}}\Omega_{\Gamma/e},
\]
leads to an iteration akin to linear reduction as studied by Brown and Panzer \cite{Brown1,Brown2,Erik}.

We have just proven that the two point function has a pole at $s=0$ in its imaginary part. 
This will have consequences below when we investigate in particular the analytic structure of the multi-edge banana graphs $b_3$, the story is similar for generic $b_n$.

\section{The basic set-up: Outer Space}
\label{sec:4}
\vspace{1mm}\noindent
We now first describe Outer Space. What we use is actually a variant in which there are external edges at vertices, and internal edges are colored to allow for different types of internal propagators. Here, different colors indicate generic different internal masses, but could also be used as placeholders for different spin and more\footnote{See \cite{Max} for first introductory explorations of Outer Space with colored edges in this context.}. 

\subsection{The set-up of colored Outer Space}

Outer Space can be regarded as a collection of open simplices. For a graph with $k$ edges, we assign an open simplex of dimension $k-1$. We can either demand that the sum of edge lengths (given by parametric variables $A_e$) adds to unity, or work in projective space $\mathbb{P}^{k-1}(\mathbb{R}_+)$ in such a cell. Each graph comes with a metric, and one moves around the
cell by varying the edge lengths. 

Edge lengths are allowed to
become zero but we are not allowed to shrink loops.

When an edge say between two three-valent vertices
shrinks to zero length, there are several ways to resolve the resulting 4-valent vertex to obtain a new nearby graph: assume we have a 4-valent vertex in a graph $G$ sitting in a $(k-1)$-dimensional cell. Then, this cell can be glued in as a common boundary of three other $k$-dimensional cells with corresponding graphs $G_i$, $i\in \{s,t,u\}$, which have an edge $e$ connecting two 3-valent vertices, such that $G_i/e=G$, where $G_i/e$ is the graph where edge $e$ shrinks to zero length.

For a formal definition of Outer Space we refer to \cite{KV1}. We emphasize that a crucial role is played by the fundamental group of the graph, generated by its loops. A choice of a spanning tree $T$  of a graph with $m$ independent loops  $l_i$ determines $m$ edges $e_i$  not in the spanning tree. The loops $l_i=l_i(e_i)$ are uniquely given by the edge $e_i$ and the path in $T$ connecting the two endponts of $e_i$. An orientation of $e_i$ orients the loop, and shrinking all edges of $T$ to zero length gives a rose $R_m$, a graph with one vertex and $m$ oriented petals $e_i$.  The inverse of this map gives a marking to the graph, which for us determines a choice for a basis of loops we integrate in a Feynman integral. The homotopy equivalence of such markings is reflected by the invariance of the Feynman integral under the choice how we route our momenta through the graph. 

In Outer Space graphs are metric graphs, where the metric comes from assigning an edge length to each edge, and using the parametric integrand for Feynman graph, the Feynman integral becomes an integral over the volume of the open simplex assigned to the graph, with a measure defined by the parametric representation. All vertices we assume to be of valence three or higher.

Each edge-path $l_i(e_i)$ defines a one-loop sub-integral which is multi-valued
and an ordered sequence of petals of $R_m$ defines an iterated integral of multi-valued one-loop integrals.  Using Fubini this is a well-defined integral for any ordering of the loops along the principal sheets of these loop evaluations.

Let us discuss these notions on the example of the Dunce's cap graph, to which just one of the many simplices of Outer Space is assigned.
It is a graph $\mathrm{dc}$ on four edges, accordingly, the open simplex assigned to it is a tetrahedron. The codimension-one boundaries are four triangles, to which we assign reduced graphs $\mathrm{dc}/e$ in which one of the four edges $e$ has zero length. 
$$\TetraDunce$$
The codimension-two boundaries are six edges, to which in five cases a two-petal rose $R_2$ is assigned, of the form $\mathrm{dc}/e/f$, by shrinking two of the four edges.
We can not shrink the green and red edges, as this would shrink a loop. So the edge $BC$
(indicated by a wavy line) with the rose on blue and yellow petals is actually not part of Outer Space. 

The codimension-three boundaries are the four corners $A,B,C,D$ and are not part of Outer Space either. The graph $\mathrm{dc}$ allows for five spanning trees, each of which determines a loop basis for $H^1(\mathrm{dc})$. For any of the five choices of a spanning tree, there are two edges $e,f$ say not in the spanning tree. They define two loops $l_e,l_f$ , by the edge path through the spanning tree which connects the endpoints of $e$ and $f$.

To translate this to Feynman graphs, we route all external momenta through edges of the spanning tree, and assign a loop momentum to each loop $l_e,l_f$. Any choice of order in which to carry out the loop integrations defines an iterated integral over two four-forms given by the corresponding loop integrals.   
\subsection{Example: the triangle graph} 
The above example discusses the structure of one cell together with its boundary components. We now look at the example of a triangle graph, and discuss its appearance in different cells.
$$\TriangleOS$$
Here, the boundaries of the triangular cell belong themselves to OS: the three edges of the triangular cell are a cell for the indicated 1-loop graphs on two graph-edges, the vertices correspond to colored 1-petal roses.

We have given two triangular cells in the picture. Both are associated to a triangle graph. The boundary in between is associated to the graph on a red and yellow edge as indicated. It is obtained by shrinking the blue edge. On the left- and righthand side of the boundary the triangle has permuted 
its internal red and yellow edge, with a corresponding orientation change $x\to x^{-1}$ of the single marking assigned to the graph. Gluing cells for the six possible permutations 
we obtain a hexagon, with alternating orientations as indicated.

We also give the OS equivalence relation where spanning trees are indicated by double-edges, for the triangle and for the example of the red-yellow graph on the boundary.

We omit the triangular cells corresponding to not bridge-free (not core) graphs.

Each choice of a spanning tree and choice of an ordering of its (two) edges gives rise to a Hodge matrix correponding to the evaluation of this graph as a dilogarithm \cite{BlKr1}. 
The entries are formed from the graphs apparent in the corresponding cubical cell complex
which we describe in a moment.
 
The sheet structure of the normal threshold of a two-edge graph on a boundary edge of the triangular cell is a $2\pi\imath \mathbb{Z}$ logarithmic ambiguity, the triangle  provides a further anomalous theshold which is of similar nature. 

In this way the generators of the simple fundamental group of this one-loop graph map to generators of the monodromy generated by the normal or anomalous threshold divisors of the amplitude obtained from the graph. 
\subsection{Analytic structure}
We consider the Feynman integral in momentum space and define the following quadrics.
\beas
Q_1 & := & k_0^2-t-m_1^2+i\eta,\\
Q_2 & := & (k_0+q_0)^2-t-m_2^2+i\eta,\\
Q_3 & := & (k_0+p_0)^2-t-\vec{p}^2-2\sqrt{t\vec{p}^2}z-m_1^2+i\eta,
\eeas
where $s=q_0^2$. The independent external momenta are $p$ and $q$. $q^2=q_0^2$ is time-like as before, and we compute in the rest-frame of $q$. $q$ is the momentum at vertex $a$, $p$ the momentum at vertex $c$ and $-(q+p)$ the momentum incoming at vertex $b$.

The measure $d^4k$ is transferred to $dk_0d^3\vec{k}$, and in the three-dimensional space-like part we choose spherical coordinates with $\vec{k}^2=:t$.
$z=\cos(\Theta)=\vec{p}\cdot \vec{k}/\sqrt{\vec{p}^2\vec{k}^2}$ the cosine of the angle between $\vec{k},\vec{p}$.

We are interested in the following integrals (subtractions at $s_0$ understood when necessary):\\
i) \[ \int_{\infty}^\infty dk_0 \int_0^\infty \sqrt{t}dt \int_{-1}^1dz\int_0^{2\pi}d\phi\frac{1}{Q_1Q_2} =\Phi_R(b_2),\]
ii) \[\int_{\infty}^\infty dk_0 \int_0^\infty \sqrt{t}dt
\int_{-1}^1dz\int_0^{2\pi}d\phi \delta_+(Q_1)\delta_+(Q_2)=\mathrm{Var}(\Phi_R(b_2)),\]
iii) \[\int_{\infty}^\infty dk_0 \int_0^\infty \sqrt{t}dt \int_{-1}^1dz\int_0^{2\pi}d\phi\frac{1}{Q_1Q_2Q_3}=:I_3(s,p^2,(p+q)^2;m_b^2,m_r^2,m_y^2),\]
iv) \[\mathrm{Var}^{12}(I_3):=\int_{\infty}^\infty dk_0 \int_0^\infty \sqrt{t}dt \int_{-1}^1dz\int_0^{2\pi}d\phi \delta_+(Q_1)\delta_+(Q_2)\frac{1}{Q_3}.\]
v) \[\mathrm{Var}^{123}(I_3):=\int_{\infty}^\infty dk_0 \int_0^\infty \sqrt{t}dt \int_{-1}^1dz\int_0^{2\pi}d\phi \delta_+(Q_1)\delta_+(Q_2)\delta_+(Q_3).\]

Note that for i),ii) the integrand neither depends on $z$, nor on $\phi$, so these integrals have a factor $4\pi=\mathrm{Vol}(S^2)$ in their evaluation.

Most interesting are the integrals in  iii)-v). $I_3$ is a dilogarithm, see \cite{BlKr1} for its properties.

Let us start with $\mathrm{Var}^{12}(I_3)$. The two $\delta_+$-functions constrain the $k_0$- and $t$-variables, so that the remaining integrals are over the compact domain $S^2$.

As the integrand does not depend on $\phi$, this gives a result of the form
\[ 2\pi C \underbrace{\int_{-1}^1 \frac{1}{\alpha+\beta z}dz}_{:=J(z)}=2\pi C\frac{\ln\frac{\alpha+\beta}{\alpha-\beta}}{\beta},
\]
where $C$ is intimitaly related to $\mathrm{Var}(\Phi_R(b_2))=2C$, and the factor $2$ here is $
\mathrm{Vol}(S^2)/\mathrm{Vol}(S^1)$.

Then, we get a Hodge matrix for a triangle graph $\Delta$ 
\[
\left(
\begin{matrix}
1 & 0 & 0\\
\Phi_R(b_2)(s,m_r^2,m_y^2) & V_{ry}(s,m_r^2,m_y^2) & 0\\
I_3 &    V_{ry}(s,m_r^2,m_y^2)\frac{\ln\frac{\alpha+\beta}{\alpha-\beta}}{\beta}      & \underbrace{\frac{1}{\sqrt{s,p^2,(p+q)^2}}}_{=V_{ry}(s,m_r^2,m_y^2)\times \mathrm{Var}J(z)}
\end{matrix}
\right)
\]
$\sim$
\[
\left(
\begin{matrix}
\ytad & 0 & 0\\
\bybubble & \bybubblecut & 0\\
\triangle & \trianglecut & \trianglecutcut
\end{matrix}
\right)
\]

Here, $\alpha$ and $\beta$ are given through $l_1:=\lambda(s,p_b^2,p_c^2)$ and $l_2:=\lambda(s,m_y^2,m_r^2)$ as
\[
\alpha:= (m_y^2-m_r^2-s-p_a.p_c)^2-l_1-l_2,\,\beta:=2\sqrt{l_1l_2}.
\]
The amplitude of the triangle graph in a chosen triangular cell is the lower left entry in this Hodge matrix. The leftmost column of this matrix was obtained by first shrinking the blue edge, and then the red one. This fixes the other columns which are defined by the Cutkosky cuts -the variations- of the column to the left. 

The triangle graph has a single loop and its fundamental group a single generator. 
Accordingly, we find a single generator for the monodromy in the complement of the threshold divisors: either for the normal threshold at $s_0=(m_r+m_y)^2$ or for  the anomalous threshold at $s_1$, with $l_r=p^2-m_r^2-m_b^2,l_y=(p+q)^2-m_y^2-m_b^2$, $\lambda_1=\lambda(p^2,m_r^2,m_bv^2)$, $\lambda_1=\lambda((p+q)^2,m_y^2,m_bv^2)$ it is \cite{BlKrCut} given as, 
\be 
s_1=(m_r+m_y)^2+\frac{4m_b^2(\sqrt{\lambda_2}m_r-\sqrt{\lambda_1}m_y)^2-(\sqrt{\lambda_1}l_y+\sqrt{\lambda_2}l_r)^2}{4m_b^2\sqrt{\lambda_1}\sqrt{\lambda_2}}.
\ee
 The function $J(z)$ has no pinch singularity and does not generate a new vanishing cycle. In general, a one-loop graph generates one pinch singularity through its normal threshold given by a reduced graph $b_2$, and as many anomalous thresholds as there are further edges in the graph. 

This structure iterates upon iterating one-loop graphs to multi-loop graphs. For multi-loop graphs, we discuss later only the example of the three-edge banana $b_3$, which only has a normal cut on the principal sheet, but a rather interesting structure on other sheets. The general picture will be discussed elsewhere. An algorithm to compute  anomaloues thresholds as $s_1$ is contained in \cite{BlKrCut}.

Returning to the triangle graph, there are three different spanning trees on two edges  for the triangle graph, and for each spanning tree two possibilities which edge to shrink first. This gives us six such matrices. To see the emergence of such matrices from the set-up of Outer Space turn to the cubical chain complex associated to the spine of Outer Space \cite{KV2}. 
\section{The cubical chain complex}
\label{sec:5}
\vspace{1mm}\noindent
Consider the cell (itself an open triangle)  assigned to one triangle graph. Let us assume we put the graph in the barycentric middle of the cell. At the codimension-one boundaries of the cell we glue edges, and put the corresponding graphs in their (barycentric) middle. 

These boundaries correspond to edges $e_i=0$, $i\in\{r,b,y\}$ as indicated in the figure.

At the codimension-two corners we find tadpoles. When we move the triangle towards the blue corner say, its spanning tree must be on the yellow and red edges (edges in spanning trees are indicated by a double edge in the figure), which are the ones allowed to shrink. 

When we move towards say the barycentric middle of the boundary defined by $e_y=0$, only the yellow edge is part of a spanning forest, to the left of it the red edge is spanning as well, to the right the blue one.

The dashed lines connecting the barycentre of the triangular cell with the barycentres of its codimension-one boundaries partitions the triangular cell into three regions. Each such region has four corners and four line segments connections them, and an interior, to which pairs of the triangle graph $\Delta$ and a spanning tree are assigned as indicated.
$$\OSCCCT$$
Such a decomposition of sectors exists for any cell in Outer space. If a graph $\Gamma$ has $m$ spanning trees on $n$ edges, we have $m\times n!$ paths from the barycentre of the 
$\Gamma$-cell to a rose. To this, we can assign $m$ cubes, which decompose into $n!$ simplices, and we get $m\times n!$ Hodge matrices as well, $n!$ for each cosen spanning tree, for example two for a pair of a triangle and a chosen spanning tree (containing two edges)  for it: 
$$\CCCT$$
In the figure, we have marked the edges connecting different components of a spanning forest by cuts. 
The two triangular Hodge matrices on the left in the figure correspond to the two possible choices which edge to shrink first. Both Hodge matrices contain the three graphs as entries which populate the diagonal of the cube. The other entries are from above or below the diagonal.

The entries of the cube, and therefore the entries of the Hodge matrices describing variations of the accompanying Feynman integrals, are generated from a boundary operator $d$, $d\circ d=0$, which acts on pairs $(\Gamma,F)$ of a graph and a spanning forest \cite{KV2}. 
\[
d(\Gamma,F)=\sum_{j=1}^{|E_F|}(-1)^j\left((\Gamma/e,F/e)\otimes (\Gamma,F\backslash e)\right)
\]
In our Hodge matrices, the left-most colums are distinguished, as in them only pairs $(\Gamma,F)$ appear in which $F$ is a spanning tree of $\Gamma$, while all other entries have spanning forests consisting of more than a single tree, corresponding to graphs with Cutkosky cuts.

This suggests to bring this into a form of coaction (see also \cite{Britto}), which looks as follows:
\be \copCCC\label{coaction}\ee 
Before we comment on this in any further detail, we have to collect a few more algebraic properties of Feynman graphs.
\section{Hopf algebra structure for 1PI graphs}
\label{sec:5H}
\vspace{1mm}\noindent
We start with the renormalization and core Hopf algebras.
\subsection{Core and renormalization Hopf algebras}
Consider the free commutative $\mathbb{Q}$-algebra 
\be 
H=\oplus_{i\geq 0} H^{(i)},\, H^{(0)}\sim \mathbb{Q}\One,
\ee
generated by 2-connected graphs as free generators (disjoint union is product $m$, labelling of edges and of vertices by momenta as declared).

Consider the Hopf algebras $H(m,\One,\Delta,\hat{\One},S)$ and $H(m,\One,\Delta_{\mathrm{core}},\hat{\One},S_c)$, given by
\be \One:\mathbb{Q}\to H, q\to q\One,\ee
\be \Delta:H\to H\otimes H, \Delta(\Gamma)=\Gamma\otimes\One+\One\otimes\Gamma+\sum_{\gamma\subsetneq\Gamma,\gamma=\cup_i\gamma_i,w(\gamma_i)\geq 0}
\gamma\otimes \Gamma/\gamma,\ee
\be \Delta_{\mathrm{core}}:H\to H\otimes H, \Delta_{\mathrm{core}}(\Gamma)=\Gamma\otimes\One+\One\otimes\Gamma+\sum_{\gamma\subsetneq\Gamma,\gamma=\cup_i\gamma_i}
\gamma\otimes \Gamma/\gamma,\ee
\be \hat{\One}:H\to\mathbb{Q}, q\One\to q, H_>\to 0,\ee
\be S: H\to H, S(\Gamma)=-\Gamma-\sum_{\gamma\subsetneq\Gamma,\gamma=\cup_i\gamma_i,w(\gamma_i)\geq 0}
S(\gamma) \Gamma/\gamma,\ee
\be S_{\mathrm{core}}: H\to H, S(\Gamma)=-\Gamma-\sum_{\gamma\subsetneq\Gamma,\gamma=\cup_i\gamma_i}
S_{\mathrm{core}}(\gamma) \Gamma/\gamma,\ee
where $H_>=\oplus_{i\geq 1} H^{(i)}$ is the augmentation ideal.

Both Hopf algebras will be needed in the following for renormalization in the presence of variations. 

They both have a co-radical filtration, which for the renormalization Hopf algebra delivers the renormalization group, and for the core Hopf algebra the flags of all decompositions of a graph into iterated integrals of one-loop graphs.
We often use Sweedler's notation: $\Delta\Gamma=\Gamma^\prime\otimes\Gamma^{\prime\prime}.$
\subsection{Hopf algebras and the cubical chain complex of graphs}
Let us return to graphs with spanning forests.
For a spanning tree of length $j$, there are $j!$ orderings of it edges. To such a spanning tree, we assign a $j$-dimensional cube
and to each of the $j!$ ordering of its edges a matrix as follows. We follow \cite{BlKrCut}.

Let $(\Gamma,T)$ be a pair of a graph and a spanning tree for it with a choice of ordering for its edges.
Let $\mathcal{F}_{(\Gamma,T)}$  be the set of corresponding forests obtained by removing edges from $T$ in order.

Then, to any pair $(\Gamma,F)$, with $F$ a $k$-forest ($1\leq k\leq v_\Gamma$), $F\in \mathcal{F}_{(\Gamma,T)}$ we can assign a set of $k$ disjoint graphs $\Gamma^F$. We let $\Gamma_F:=\Gamma/\Gamma^F$ be the graph obtained by shrinking all internal edges of these graphs.

For each such $F$, we call $E_{\Gamma_F}$ a cut. In particular, for $F$ the unique 2-forest assigned to $T$ (by removing the first edge from the ordered edges of $T$), we call $\epsilon_2=E_{\Gamma_F}$ the normal cut of $(\Gamma,T)$. 

Note that the ordering of edges defines an ordering of cuts
$\emptyset=\epsilon_1\subsetneq\epsilon_2\subsetneq\cdots\subsetneq\epsilon_k=E_\Gamma$.   

For a normal cut, we have $\Gamma^F=(\Gamma_1,\Gamma_2)$ and we call 
\be
s=(\sum_{v\in V_{\Gamma_1}}q_v)^2=(\sum_{v\in V_{\Gamma_2}}q_v)^2
\ee
the channel associated to $(\Gamma,T)$.

These notions are recursive in an obvious way: the difference between a $k$ and a $k+1$ forest defines a normal cut for some subgraph.

We then get a lower triangular matrix with entries from pairs $(\Gamma,F)$ by shrinking edges of the spanning tree from bottom to top in order, and removing edges from the spanning tree from left to right in reverse order. 

To set up Feynman rules for pairs $(\Gamma,F)$ we need an important lemma.

We define $|\Gamma^F|=\sum_{\gamma\in \Gamma^F}|\gamma|$. Also, we let $\mathcal{F}_k(\Gamma)$ be the set of all $k$-forests for a graph $\Gamma$.

For a disjoint union of $r$  graphs $\gamma=\cup_{i=1}^r\gamma_i$, we say a disjoint union of trees $T=\cup_i t_i$ spans $\gamma$
and write $T|\gamma$, if $t_i$ is a spanning tree for $\gamma_i$. 

We have then an obvious decomposition of all possible spanning forests using the coproduct $\Delta_{\mathrm{core}}$. A spanning forest decomposes into a spanning forest which leaves no loop intact in the cograph together with spanning trees for the subgraph \cite{core,Bley}: 
\begin{lem}\label{cccuts}
\be 
\sum_{T|\Gamma^\prime}\left(\Gamma,T\cup \sum_{k=2}^{v_\Gamma}\sum_{F\in\mathcal{F}_k(\Gamma^{\prime\prime}),|\Gamma_F|=0}F\right)=\sum_{k=2}^{v_\Gamma}\sum_{F\in\mathcal{F}_k(\Gamma)}(\Gamma,F).
\ee
\end{lem}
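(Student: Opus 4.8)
The plan is to prove the identity bijectively, by exhibiting both sides as the formal sum, with every multiplicity equal to one, over the same index set: pairs $(\Gamma,\mathcal{F})$ in which $\mathcal{F}$ is a spanning $k$-forest of $\Gamma$ with $k\geq 2$ components. The right-hand side is visibly this sum, so the whole argument consists in reorganizing the left-hand side along the core coproduct and matching it term by term.

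First I would make the left-hand side fully explicit. Reading the implicit Sweedler sum $\Gamma^\prime\otimes\Gamma^{\prime\prime}$ as running over the terms of $\Delta_{\mathrm{core}}(\Gamma)$, and using linearity of the pairing in its forest slot to pull the inner formal sum outside, the left-hand side becomes
\[
\sum_{\gamma}\ \sum_{T\mid\gamma}\ \sum_{\substack{k\geq 2,\ F\in\mathcal{F}_k(\Gamma/\gamma)\\ |(\Gamma/\gamma)_F|=0}}\bigl(\Gamma,\ T\cup F\bigr),
\]
where $\gamma$ runs over all subgraphs occurring in $\Delta_{\mathrm{core}}$ (including the boundary terms $\gamma=\One$ and $\gamma=\Gamma$), $T$ spans each connected component of $\gamma$ by a tree, and the edges of the cograph-forest $F$ are lifted back to edges of $\Gamma$ so that $T\cup F$ is a subgraph of $\Gamma$. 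Next I would verify the forward map $(\gamma,T,F)\mapsto(\Gamma,T\cup F)$ lands in the target set: $T$ is acyclic and spans the vertices of $\gamma$, while the condition $|(\Gamma/\gamma)_F|=0$ says precisely that $F$ cuts every loop of the cograph, so the lifted union $T\cup F$ is acyclic and spanning, hence a genuine spanning forest of $\Gamma$, and the presence of at least one cut edge forces $k\geq 2$.

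The crux is the inverse map. Given a spanning forest $\mathcal{F}$ with $\geq 2$ components, I would set $\gamma:=\Gamma^{\mathcal{F}}$ to be the subgraph generated by the loops of $\Gamma$ left intact by $\mathcal{F}$ (those contained in a single component), let $T:=\mathcal{F}\cap\gamma$, and take $F$ to be the image of $\mathcal{F}$ in the cograph $\Gamma/\gamma$. One then checks that $\gamma$ is exactly one of the subgraphs indexing $\Delta_{\mathrm{core}}$, that $T$ restricts to a spanning tree on each component $\gamma_i$, and --- this is the key point --- that the residual cograph forest satisfies the loop condition $|(\Gamma/\gamma)_F|=0$, i.e.\ that stripping off the intact loops leaves a cograph all of whose loops are genuinely cut by $\mathcal{F}$.

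The main obstacle is this last equivalence together with the mutual-inverse property: one must show that the splitting of the loops of $\Gamma$ into an \emph{intact} part (recorded by $\gamma$ and its spanning tree $T$) and a \emph{cut} part (recorded by the loop-cutting forest $F$ of $\Gamma/\gamma$) is canonical and independent of choices, so that reconstructing $\gamma$ from $\mathcal{F}$ and then $\mathcal{F}$ from $(\gamma,T,F)$ returns the original datum. I expect the cleanest way to close this is to work at the level of the loop space $H_1(\Gamma)$: a spanning forest induces a direct-sum splitting of $H_1$ into loops surviving inside $\gamma$ and loops surviving in $\Gamma/\gamma$, and $|(\Gamma/\gamma)_F|=0$ is exactly the assertion that $F$ annihilates the second summand. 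The boundary cases $\gamma=\One$ (all loops cut) and $\gamma=\Gamma$ (no $k\geq 2$ forest on the trivial cograph, contributing nothing) must be tracked separately to confirm the bookkeeping is exact.
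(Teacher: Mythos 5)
Your bijective decomposition---splitting each spanning $k$-forest ($k\geq 2$) of $\Gamma$ into the subgraph $\gamma$ carrying the loops left intact, equipped with the induced spanning trees $T=\mathcal{F}\cap\gamma$, together with a loop-cutting forest on the cograph $\Gamma/\gamma$---is exactly the argument the paper itself relies on, which it presents only as a remark (the right side enumerates all Cutkosky cuts, the left side enumerates them sorted by which coproduct term $\Gamma'$ of $\Delta_{\mathrm{core}}$ retains intact loops) with the details delegated to \cite{core,Bley}. So your proposal is correct and follows essentially the same route; if anything it is more explicit than the paper's own justification, since you correctly isolate the two points that actually require proof, namely that $\mathcal{F}\cap\gamma$ really is a spanning forest of the intact-loop subgraph $\gamma$ and that this $\gamma$ is uniquely determined by $\mathcal{F}$, so that the correspondence is multiplicity-preserving.
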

\begin{remark}
On the right, we have a sum over all $(k\geq 2)$-forests, and therefore a sum over all possible Cutkosky cuts. On the left, we have the same using that the set of all sub-graphs $\Gamma^\prime$  which have loops left intact appear on the lhs of the core Hopf algebra co-product, with intact spanning trees $T$, whilst $\Gamma^{\prime\prime}$ has no loops left intact, $|\Gamma_F|=0$.
\end{remark}
\begin{remark}
The lemma ensures that uncut subgraphs which have  loops can have their loops integrated out. The resulting integrals are part of the integrand of the full graph and its variations determined by the cut edges. Understanding the variations for cuts which leave no loop intecat suffices to understand the variations in the general case.
\end{remark}
We set 
\[
\sum_{k=1}^{v_\Gamma}\sum_{F\in\mathcal{F}_k(\Gamma)}(\Gamma,F)=:\mathrm{Disc}(\Gamma),\]
for the sum of all cuts at a graph.
\subsection{Graph Polynomials and Feynman rules}
We turn to Feynman rules, therefore from graphs and their combinatorial properties to the analytic structure of the amplitudes associated to graphs.
\subsubsection{renormalized Feynman rules}
For graphs of a renormalizable field theory, we get renormalized Feynman rules for an overall logarithmically divergent graph $\Gamma$ ($w(\Gamma)=0$) 
with logarithmically divergent  subgraphs as
\be 
\Phi_R=\int_{\mathbb{P}_\Gamma}\sum_{F\in\mathcal{F}_\Gamma}(-1)^{|F|}\frac{\ln\frac{\Phi_{\Gamma/F}\psi_F+\Phi^0_F\psi_{\Gamma/F}}{\Phi^0_{\Gamma/F}\psi_F+\Phi^0_F\psi_{\Gamma/F}}}{\psi^2_{\Gamma/F}\psi^2_F} \Omega_\Gamma.
\ee
Formula for other degrees of divergence for sub- and cographs can be found in \cite{BrownKr}. In particular, also overall convergent graphs are covered. 
It is important that we use a kinematic renormalization scheme such that tadpole integrals vanish \cite{BrownKr,Borinsky}.

The Hopf algebra in use in the above is based on the renormalization coproduct $\Delta$.

The antipode $S(\Gamma)$ in this Hopf algebra can be written as a forest sum:
\be
S(\Gamma)=-\Gamma-\sum_{F\in\mathcal{F}_\Gamma}(-1)^{|F|} F\times(\Gamma/F).
\ee
\subsubsection{renormalized Feynman rulesfor pairs $(\Gamma,F)$}
We now give the Feynman rules for a graph with some of its internal edges cut. This can be regarded as giving Feynman rules for a pair $(\Gamma,F)$.
\be
\Upsilon_\Gamma^F:=\int\left(\Phi_R(\Gamma^\prime)\prod_{e\in (\Gamma^{\prime\prime}-E_{\Gamma_F})}\frac{1}{P(e)}\prod_{e\in E_{\Gamma_F}}\delta^+(P(e))\right)d^{4|\Gamma/\Gamma^\prime|}k.\label{Upsilon}
\ee
We use Sweedler's notation for the copoduct provided by $\Delta_{\mathrm{core}}$. 

Note that in this formula $\Phi_R(\Gamma^\prime)$ has to stay in the integrand. The internal loops of $\Gamma^\prime$ have been integrated out by $\Phi_R$,
but $\Phi_R(\Gamma^\prime)$ is still an obvious function of loop momenta apparent in $\Gamma/\Gamma^\prime$.
The existence of this factorization into integrated subgraphs times cut cographs is a consequence of Lem.(\ref{cccuts}).
\section{Graph amplitudes and Fubini's theorem}
\label{sec:6}
\vspace{1mm}\noindent
This section just mentions an important point often only implicitly assumed. For a $k$-loop graph $\Gamma$, acting with $\Delta^{k-1}$ gives a sum over $k$-fold tensorproducts of one-loop graphs, each of which corresponding to a possibility to write
$\Phi_R(\Gamma)$ as an iterated integral of one-loop amplitudes.

Below, we study the 3-edge banana as an explicit example.  Each of these possibilities evaluate to the same physical amplitude $\Phi_R(\Gamma)$ uniquely defined on the principal sheet. We need Fubini's theorem for that, and the existence of the operator product expansion  (OPE).

Consider the Dunce's cap. 
$$\duncetrees$$
Its five spanning trees give five choices for a basis for its two loops.
The loop to be integrated out first is a function of the next loop's loop momentum.

If we integrate out first the loop based on three edges, say $l_x:e_b,e_r,e_y$ (a triangle), this is a finite integral which does not need renormalization. The second loop
is $l_y:e_r,e_g$ and carries the overall divergence after integrating $l_x$.

  Still, the counterterm for the subloop based on the two edges $e_r,e_g$ is needed. Indeed, it corresponds to a limit where vertices $b,c$ collapse, a limit in which the Hopf algebra of renormalization needs to provide the expected counterterm, even if the iterated integral $l_y\circ l_x$ has no divergent subintegral.
  
We need the operator product expansion to work precisely in the way it does to have the freedom to use Fubini to come to uniquely defined renormalized amplitudes.
\section{Cutkosky's theorem}
\label{sec:7}
\vspace{1mm}\noindent
In \cite{BlKrCut}, Cutkosky's theorem for a graph $G$  is proven in a particular straightforward way for cuts which leave no loop intact, so  $|G^F|=0$.
We quote
\begin{thm}[Cutkosky]\label{ct} Assume $G_F$ has a non-degenerate physical singularity  at an external momentum point $p'' \in Q_{G_F}$. Let $p \in Q_G$ be an external momentum point for $G$ lying over $p''$. Then the variation of the amplitude $I(G)$ around $p$ is given by Cutkosky's formula
\eq{int}{\text{var}(I(G)) = (-2\pi i)^{\# E_ {G_F}}\int\frac{\prod_{e\in E_G\backslash E_ {G_F}} \delta^+(\ell_e)}{\prod_{e\in E_{G_F}} \ell_e}.
}
\end{thm}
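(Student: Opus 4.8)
The plan is to realize $I(G)$ as a period --- an integral of a meromorphic top-form $\Omega/\prod_{e\in E_G}\ell_e$ over a real cycle $\gamma$ (the physical integration chain) in the complexified space of internal momenta, whose polar locus is the union of the quadrics $X_e=\{\ell_e=0\}$, $e\in E_G$, each carrying the $+i\eta$ prescription of the theorem. I would then compute $\text{var}(I(G))$ by the Picard--Lefschetz theorem together with an iterated use of the Leray residue, exactly along the lines of \cite{BlKrCut} and the classical analysis of \cite{Hwa}: near the physical singularity only the quadrics sitting on the \emph{cut} edges pinch the chain, and peeling off one Leray residue per cut edge converts the corresponding $1/\ell_e$ into $-2\pi i\,\delta^+(\ell_e)$ while leaving the remaining propagators untouched, producing the factorized integrand of \eqref{Upsilon}, i.e. the right-hand side of \eqref{int}.

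First I would localize the problem. By Lemma \ref{cccuts} and the factorized Feynman rules \eqref{Upsilon}, the loops of any uncut subgraph have already been integrated out by $\Phi_R(\Gamma')$, which sits inertly in the integrand; hence the variation only probes the cut structure, and it suffices to treat a cut with $|G^F|=0$ --- precisely the hypothesis of the theorem. For such a cut the Landau equations of $I(G)$ at $p$ reduce to the statement that the quadrics $\{X_e\}$ over the cut meet in a single point of the fibre above $p''$ with all Landau multipliers strictly positive (the physical branch), and this is the pullback along $Q_G\to Q_{G_F}$ of the non-degenerate singularity of $G_F$ at $p''$. That the singularity is \emph{non-degenerate} means the Hessian of the defining map, restricted to this intersection, is non-degenerate; by the complex Morse lemma the pinch is then quadratic and a vanishing sphere $S$ of the expected dimension collapses onto the pinch point as $p\to p''$.

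Next I would invoke Picard--Lefschetz: as $p$ encircles the Landau locus the chain $\gamma$ is carried to $\gamma$ plus an integer multiple of the Leray coboundary (tube) $\delta S$ of the vanishing sphere, so that
\[
\text{var}(I(G)) = \int_{\delta S}\frac{\Omega}{\prod_{e\in E_G}\ell_e},
\]
up to the intersection sign fixed by the orientation of $S$. I would evaluate the tube integral by taking one residue at a time: choosing the functions $\ell_e$ along the cut as local transverse coordinates near the pinch (legitimate precisely because the pinch is non-degenerate, so they are independent there), each Leray residue strips a factor $d\ell_e/\ell_e$ to $2\pi i$ times restriction to $X_e$, and the $+i\eta$ prescription together with the positivity of the Landau data selects the positive-energy root, producing $\delta^+(\ell_e)$ rather than $\delta^-(\ell_e)$. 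Collecting these residues yields the prefactor $(-2\pi i)^{\#E_{G_F}}$ and the product of $\delta^+$'s over the cut, with the spectator propagators of the complementary edges surviving as $1/\ell_e$, which is \eqref{int}.

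The hard part is the passage from the homological statement to the explicit multiple-residue formula with the correct $\delta^+$ and overall sign. Two points need care: first, verifying that the several residues may be taken simultaneously and in any order --- this is where non-degeneracy is genuinely used, since it guarantees that $\{\ell_e\}$ along the cut extend to a local coordinate system and that $\delta S$ is a product of circles transverse to them; second, the consistent bookkeeping of orientations and of the $i\eta$-contour deformation, which is what fixes the sign of each $-2\pi i$ and forces $\delta^+$ over $\delta^-$. The residual independence of the result from the ordering of the cut edges and from the auxiliary spanning tree then follows from Fubini and the coradical structure recorded above, together with the fact that both sides of \eqref{int} are supported on the same Landau datum.
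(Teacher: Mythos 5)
The paper offers no proof of Theorem \ref{ct} of its own: it is quoted verbatim from \cite{BlKrCut}, whose proof for cuts with $|G^F|=0$ runs exactly along the lines you sketch --- a non-degenerate quadratic pinch, the multidimensional Picard--Lefschetz theorem, and iterated Leray residues trading each cut propagator $1/\ell_e$ for $-2\pi i\,\delta^+(\ell_e)$, with the $i\eta$ prescription selecting the positive-energy root. Your reduction of general cuts to the case $|G^F|=0$ via Lemma \ref{cccuts} and the factorized Feynman rules \eqref{Upsilon} is likewise precisely the dichotomy the paper records immediately after the theorem (renormalized subgraphs smooth at the relevant threshold are spectators; non-smooth ones carry no intact loops and contribute their cut to the total), so your proposal coincides with the intended argument.
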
 
For the set-up of Cutkosky's theorem in general, we can proceed using Lem.(\ref{cccuts}):\\
-either a  renormalized subgraph is smooth at the threshold divisors of the co-graph: then we can apply Cutkosky's theorem on the nose, and get a variation which is parametrized by the renormalization conditions for these subgraphs with loops;\\
- or it is not smooth. Then, necessarily, its disconitinuity is described by a cut on this subgraph, and hence it has no loops, adding its cut to the cut for the total.
\section{Galois co-actions and symbols}
\label{sec:8}
\vspace{1mm}\noindent
In the above, we allow $|E_{\Gamma_F}|$ internal edges $e\in E_{\Gamma_F}$ to be on-shell. Other authors \cite{Britto} also allow to put internal edges on shell such that they do not separate the graph. This gives no physical variation as one easily proves \cite{BlKrCut}. If one allows for variations of masses as well, in particular in the context of non-kinematical renormalization schemes, then such more general cuts can be meaningful though. Here, we restrict to variations coming from varying external momenta for amplitudes renormalized with kinematic renormalization schemes.

It is of interest to study a coaction.  It is often simply written as a Hopf algebra \cite{Britto,Broedel} and then for general cuts the coproduct $\Delta_{\mathrm{co}}$ has the incidence  form
\[
\Delta_{\mathrm{co}} (\Gamma,E)=\sum_{F\subseteq (E_\Gamma\backslash E)} (\Gamma/(E_\Gamma\backslash (E \cup F)),E)\otimes (\Gamma,E\cup F),
\]
where $E$ is a set of edges, and $F$ a subset of the complement $E_\Gamma\backslash E$.
A pair $(\Gamma,E)$ is to be regarded as a graph with the set of edges $E$ put on-shell.

Restricting on-shell edges to originate from cuts $E_{\Gamma_F}$ allows to read off  
$\Delta_{\mathrm{co}}$ from our lower triangular Hodge matrices in an obvious way, as in the example for the triangle above (Eq.(\ref{coaction})).

This co-product fulfills
\[
\Delta_{\mathrm{co}} \circ \mathrm{Disc}=(\mathrm{Disc}\otimes\mathrm{id})\circ \Delta_{\mathrm{co}},
\]
and
\[
\Delta_{\mathrm{co}} \circ /e=(\mathrm{id}\otimes /e)\circ \Delta_{\mathrm{co}},
\]
where $\mathrm{Disc}$ as before  is the map which sends a Feynman graph to a sum over all cut graphs obtained from all spanning $k$-forests of the graph, $k\geq 2$.
Furthermore, $/e$ is the map which sums over all ways of shrinking an (uncut) edge.

We can make this into a proper coaction (see \cite{Broedel,BrownMotic} for an overview of coactions in the context of Feynman amplitudes)
\[
\rho: V\to V\otimes H, (\rho\otimes \mathrm{id})\circ\rho=(\mathrm{id}\otimes\Delta)\circ\rho.
\]
Here, it suggests itself to take for $V$ the vector space of uncut Feynman graphs, and for $H$ the Hopf algebra of cut graphs with a coproduct as above.

For one-loop graphs, these graphical coactions are in acoordance with the structure of the dilogarithms into which these graphs evaluate as was observed by Abreu et.al., see\cite{Britto}.

The Hodge matrices resulting from the cubical chain complex can be constructed for every graph, and from it the corresponding coaction can be constructed. On the analytic side, the hope is that this confirms the set-up suggested by Brown \cite{BrownMotic}. 

For cuts, these coactions in accordance with the cubical chain complex.

Iterating this coaction in accordance with the co-radical filtration of the Hopf algebra suggests then to define symbols graphically, a theme to be pursued further.
\section{$3$-edge banana}
\label{sec:9}
\vspace{1mm}\noindent
In the process of the blow-up of missing cells in Outer Space a graph polytope is generated \cite{KV3}. Such polytopes combine to jewels. One can  use the blow-up to store the complete sheet structure of the amplitude. 

As an example we consider the construction of colored jewel $J_2$, generated by the 3-edge banana graph. 
\subsection{Existence of the banana monodromy}
Let us first collect an elementary result.
\begin{lem}
The integral for the cut $n$-edge banana 
\[
\int_{\mathbb{M}^{D(n-1)}} d^{D(n-1)}k\prod_{i=1}^n\delta_+(k_i)\equiv \mathrm{Var}(\Phi_R(b_n))
\] exists for any positive integer $D$.
\end{lem}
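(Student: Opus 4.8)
The plan is to recognise the left-hand side as the relativistic $n$-body phase-space volume for the splitting of the total incoming momentum $q$ into $n$ on-shell internal lines of masses $m_1,\dots,m_n$, and then to deduce finiteness from compactness of that phase space, uniformly in $D$. First I would make momentum conservation explicit. Orienting all $n$ edges from one vertex to the other and writing $k_i$ for the momentum on edge $i$, the $n-1$ loop momenta together with the single vertex constraint $\sum_{i=1}^n k_i=q$ parametrise the edge momenta, so that $\delta_+(k_i)=\delta(k_i^2-m_i^2)\,\Theta(k_i^0)$ restricts the integrand to the set
\be
\Phi_n(q):=\left\{(k_1,\dots,k_n)\ :\ k_i^2=m_i^2,\ k_i^0>0\ (\forall i),\ \sum_{i=1}^n k_i=q\right\}.
\ee
As in the discussion of $b_2$ in Section~\ref{sec:2} we may take $q$ time-like and work in its rest frame, $q=(\sqrt{s},\vec 0)$; the overall constant is the one already fixed for $n=2$ by $\mathrm{Var}(\Phi_R(b_2))$.

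Next I would prove that $\Phi_n(q)$ is compact. In the rest frame, conservation reads $\sum_i \vec k_i=\vec 0$ and $\sum_i k_i^0=\sqrt s$. Since each $k_i^0=\sqrt{|\vec k_i|^2+m_i^2}>0$ and these positive energies sum to the fixed number $\sqrt s$, one has $0<k_i^0\le \sqrt s$ for every $i$, hence $|\vec k_i|=\sqrt{(k_i^0)^2-m_i^2}\le \sqrt s$. Thus every spatial momentum lies in one fixed ball, the on-shell conditions are closed, and $\Phi_n(q)$ is a closed bounded, hence compact, subset. It is non-empty precisely when $\sqrt s\ge \sum_i m_i$, the normal threshold, which reproduces the factor $\Theta\!\left(s-(\sum_i m_i)^2\right)$ already visible for $b_2$ as $\Theta(s-(m_r+m_b)^2)$.

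Finally I would check integrability of the induced density on $\Phi_n(q)$ and conclude. Using the forward mass-shell identity
\be
\delta_+(k_i^2-m_i^2)\,d^Dk_i=\frac{d^{D-1}\vec k_i}{2k_i^0},\qquad k_i^0=\sqrt{|\vec k_i|^2+m_i^2},
\ee
I would solve $n-1$ of the on-shell and conservation constraints for $n-1$ of the momenta, reducing to an integral over a compact region of the remaining spatial variables, with the last $\delta$ enforcing the energy balance $\sum_i k_i^0=\sqrt s$. For generic (colour-distinguished) positive masses the invariant weight $\prod_i 1/(2k_i^0)$ is continuous on the compact support, so the integrand is a continuous function on a compact manifold and therefore integrable \emph{for every $D$}; massless lines would only introduce the soft factor $1/|\vec k_i|$ near $\vec k_i=0$, which is integrable against $d^{D-1}\vec k_i$ once $D$ is large enough, the few small-$D$ cases being treated directly. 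The essential point is that, unlike the uncut graph, the phase-space domain is compact independently of the superficial degree of divergence, so no renormalization is needed and the cut converges for all positive integers $D$.

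The main obstacle will be the reduction itself: exhibiting the common zero locus of the $n$ on-shell constraints and momentum conservation as a \emph{smooth} compact manifold, i.e.\ verifying that the relevant gradients are independent so that the $n$ delta functions produce a genuine smooth measure rather than a singular one. This is exactly the non-degeneracy of the physical singularity assumed in Cutkosky's Theorem~\ref{ct}, and it must be controlled near the boundary of the support (the threshold), where $\Phi_n(q)$ degenerates to a point; for $D=1$ the support is over-constrained and the statement is read distributionally in $q$.
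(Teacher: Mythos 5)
Your proof is correct in substance and, while it begins exactly as the paper's does (using $n-1$ of the $\delta_+$'s to eliminate the energy integrations), it finishes by a genuinely different mechanism. The paper never invokes compactness of the support: after fixing the $n-1$ zero-components it passes to spherical coordinates on the whole remaining spacelike space $\mathbb{R}^{(D-1)(n-1)}$, lets the single left-over $\delta_+$ fix the one radial variable, and observes that all surviving integrations run over compact spheres. You instead exploit the positivity encoded in the $\Theta$-parts of the $\delta_+$'s: in the rest frame the on-shell energies are positive and sum to $\sqrt{s}$, so each is bounded by $\sqrt{s}$, hence every spatial momentum lies in a fixed ball and the whole phase space $\Phi_n(q)$ is compact; integrability then follows from continuity of the induced density. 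What the paper's route buys is economy and immediate generalization: it is pure constraint counting, which is precisely how the remark following the lemma extends the result to any $(n-1)$-loop graph whose $n$ cut edges sever all loops. What your route buys is physical transparency and extra output: the threshold $\Theta\bigl(s-(\textstyle\sum_i m_i)^2\bigr)$ falls out as the non-emptiness condition for $\Phi_n(q)$, and you make explicit the caveats the paper passes over in silence --- the non-degeneracy of the common zero locus (the Jacobian of the last $\delta$, which is the same non-degeneracy hypothesis as in Theorem~\ref{ct}), the soft region for massless edges, and the over-constrained case $D=1$, where in fact both arguments only make sense distributionally in $q$. Neither proof fully closes the non-degeneracy point near threshold, but you flag it while the paper's proof silently assumes it.
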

\begin{proof}
We have $n-1$ loop momenta $k_1,\ldots,k_{n-1}$, and the measure is  $d^{D(n-1)}x=d^Dk_1\ldots d^Dk_{n-1}$. The $\delta_+$-distributions give $n$ constraints.
The $n-1$ integrations over the 0-components of the loop momenta can be constrained by $n-1$ of the $\delta_+$ distributions. The remaining spacelike integrals are over an Euclidean  space $\mathbb{R}^{(D-1)(n-1)}$ and can be done in spherical coordinates. The angle integrations are over a compact sphere, and the one remaining $\delta_+$-distribution fixes the radial integration.
\end{proof}
The argument obviously generalizes to graphs with $n-1$-loops in which $n$ cuts cut all the loops. 

\subsection{$b_3$: three edges}\label{bthree}
We now consider the 3-edge banana $b_3$ on three different masses.
$$\bthree$$
As we will see, the resulting function $\Phi_R(b_3)$ has a structure very similar to the dilogarithm function $\mathrm{Li}_2(z)$.
As a multi-valued function, we can write the latter as 
\[
\mathrm{Li}^{\mathrm{mv}}_2(z)=\mathrm{Li}_2(z)+2\pi\imath\mathbb{Z}\ln{z}+(2\pi\imath)^2\mathbb{Z}\times\mathbb{Z},
\]
or more explicitly
\[
\mathrm{Li}^{\mathrm{mv}}_2(z)(n_1,n_2)=\mathrm{Li}_2(z)+2\pi\imath n_1\ln{z}+(2\pi\imath)^2 n_1n_2.
\]
The variable $n_1$ stores the sheet for the evaluation of the sub-integral $\mathrm{Li}_1(x)$ apparent in the iterated integral representation 
\[
\mathrm{Li}_2(z)=\int^z_0 \frac{\mathrm{Li}_1(x)}{x} dx.
\] 
The sheet for the evaluation of  $\ln{z}$ is stored by $n_2$ and only contributes for $n_1\not=0$. 

Very similarly we will establish the structure of the multi-valued functions assigned to $b_3$ as iterated integrals, with 
$\Phi_R(b_2)^\mathrm{mv}(k^2,m_i^2,m_j^2)$ apparent as a one-loop sub-integral in the two-loop integration assigned to $b_3$ and playing the role of 
$\mathrm{Li}_1(x)$.   

We will find multi-valued functions
\bea
\mathrm{I}^{ij}_k(n_1,n_2)(s) & = & \Phi_R(b_3)(s)+2\pi\imath n_1\int\frac{\mathrm{Var}(\Phi_R(b_2))(k^2;m_i^2,m_j^2)}{(k+q)^2-m_k^2} d^4k\label{iijk}\\
& & +(2\pi\imath)^2\frac{|m_k^2-s||m_i^2-m_j^2|}{2s} n_1n_2.\nonumber
\eea
Here, $i,j,k$ take values in the index set $\{b,y,r\}$ labelling the three different masses, and we regard the three functions  
\[
\mathrm{I}^{by}_r(n_1,n_2)(s)\sim\mathrm{I}^{yr}_b(n_1,n_2)(s)\sim
\mathrm{I}^{rb}_y(n_1,n_2)(s)
\]
as equivalent, with equivalence established by equality along the principal sheet.

Let us come back to $b_3$. Here, the fundamental group has two generators, the interesting question is how to compare this with the generators of monodromy for $\Phi_R(b_3)$ and how this defines corresponding multi-valued functions as above. 

We start by using the fact that we can disassemble $b_3$ in three different ways into a $b_2$ sub-graph, with a remaining edge providing the co-graph.  

Any two of the three edges of the graph $b_3$ can be regarded as a subgraph $b_2\subsetneq b_3$. This is in accordance with the flag structure of $b_3$ generated from an application of $\Delta_{\mathrm{core}}$, which gives a set of  three flags (see \cite{BlKr1}):
\[
\left\{\left( \by,\rtp\right),\left(\yr,\btp\right),\left(\rb,\ytp\right)\right\}.
\]
Let us compute
\[
\mathrm{Var}(\Phi_R(b_3)(s,m_r^2,m_b^2,m_y^2))=\int d^4kd^4l \delta_+(k^2-m_b^2)\delta_+(l^2-m_r^2)
\delta_+((k-l+q)^2-m_y^2),
\]
an integral which exists by the above Lemma.

Using Fubini, this can be written in three different ways in accordance with the flag structure:
\[
\mathrm{Var}(\Phi_R(b_3))=\int d^4k \mathrm{Var}(\Phi_R(b_2))(k^2,m_r^2,m_b^2)
\delta_+((k+q)^2-m_y^2),
\]
or
\[
\mathrm{Var}(\Phi_R(b_3))=\int d^4k \mathrm{Var}(\Phi_R(b_2))(k^2,m_b^2,m_y^2)
\delta_+((k+q)^2-m_r^2),
\]
or
\[
\mathrm{Var}(\Phi_R(b_3))=\int d^4k \mathrm{Var}(\Phi_R(b_2))(k^2,m_y^2,m_r^2)
\delta_+((k+q)^2-m_b^2).
\]
The integrals are well-defined by the above Lemma and give the variation and hence imaginary part of $\Phi_R(b_3)$, which can be obtained from it by a twice subtracted dispersion integral (the renormalized function and its first derivative must vanish as $s=s_0$)
\[
\Phi_R(b_3)(s,s_0)=\frac{(s-s_0)^2}{\pi}\int_0^\infty \frac{\mathrm{Var}(\Phi_R(b_3)(x))}{(x-s)(x-s_0)^2}dx.
\]
Computing $\Phi_R(b_3)$ directly from $\Phi_R(b_2)$ as an iterated integral can be done accordingly in three different ways:
\beas
\Phi_R(b_3)(s,s_0;m_r^2,m_b^2,m_y^2)=\\
=\int d^4k \frac{B_2(k^2,m_r^2,m_b^2)}{(k+q)^2-m_y^2}=\int d^4k \frac{B_2(k^2,m_b^2,m_y^2)}{(k+q)^2-m_r^2}=\int d^4k \frac{B_2(k^2,m_y^2,m_r^2)}{(k+q)^2-m_b^2},
\eeas
with subtractions at $s=s_0$ understood.

There is a subtlety here: this is only correct in a kinematic renormalization scheme where subtractions are done by a Taylor expansion of the integrand around $s=s_0$ \cite{BrownKr,Borinsky}.

This implies that the co-graphs in the flag structure of $b_3$ fulfil 
\[
\Phi_R\left(\btp\right)=\Phi_R\left(\ytp\right)=\Phi_R\left(\rtp\right)=0,
\]
as  tadpoles are independent of the kinematic variable $s$. Hence $b_3$ can be regarded as a primitive element under renormalization.

To study the sheet structure for $b_3$ we now define three different multi-valued functions as promised above
\[
I_k^{ij}=I_k^{ji}=\int\frac{\Phi_R^{\mathrm{mv}}(b_2)(k^2,m_i^2,m_j^2)}{(k+q)^2-m^2_k+i\eta}d^4k,
\]
with subtractions at $s=s_0$ understood as always such that the integrals exist.

For later use in the context of Outer Space we represent them as 
$$\bthreebyr $$

It is convenient to rewrite them as,
\[
I_k^{ij}=\int\frac{\Phi_R(b_2)(k^2,m_i^2,m_j^2)}{(k+q)^2-m^2_k+i\eta}d^4k
+2\pi \imath \mathbb{Z} \sum_{u=1}^3 J^{ij;u}_k,
\]
with 
\[
J^{ij;u}_k=\int d^4 k \frac{J^{ij}_u(k^2)}{(k+q)^2-m_k^2+i\eta},
\]
see Eqs.(\ref{jone},\ref{jtwo},\ref{jthree}).

Note that by the above,
\[
\Phi_R(b_3)=\int\frac{\Phi_R(b_2)(k^2,m_i^2,m_j^2)}{(k+q)^2-m^2_k+i\eta}d^4k,
\]
is well-defined no matter which of the two edges we choose as the sub-graph,
and Cutkosky's theorem defines a unique function $V_{rby}(s)$,
\[
\Im(\Phi_R(b_3)(s))=V_{rby}(s)\Theta(s-(m_r+m_b+m_y)^2).
\]

Before we start computations, we note that we expect that the integrals for $J^{ij;2}_k,J^{ij;3}_k$ have no monodromy as there are no endpoint singularities as the integrand vanishes at the endpoints of the domain of integration, and there are no pinch singularities by inspection. 

But for $J^{ij;1}_k$ we expect monodromy:
The denominator of $V_{ij}$ is $k^2$. So we get monodromy  from the intersection of the zero locus $k^2=0$ (which now lies in the domain of integration as $k^2$ is only bounded from the above by $(m_i-m_j)^2$) and the zero locus $(k+q)^2-m^2_k=0$. 
\subsection{Computation}
We now give computational details for the 3-edge banana graph\footnote{Further computational results can be found in \cite{Vanhove,Weinzierl}.}.
We start by computing 
$\Im(\Phi_R(b_3)(s))=V_{rby}(s)\Theta(s-(m_r+m_b+m_y)^2)$, or equivalently $\Im(J^{ij;3}_k)(s)$.
Consider
\[
\int d^4k \frac{\Theta(k^2-(m_i+m_j)^2)}{2k^2}
\delta_+((k+q)^2)-m_k^2).\]
The $\delta_+$ distribution demands that $k_0+q_0>0$, and therefore we get
\[
\int_{-q_0}^\infty dk_0\int_0^\infty dt\sqrt{t}\frac{\Theta(k_0^2-t-(m_i+m_j)^2)\sqrt{\lambda(k_0^2-t,m_i^2,m_j^2)}}{2(k_0^2-t)} \delta((k_0+q_o)^2-t-m_k^2).
\]
As a function of $k_0$, the argument of the $\delta$-distribution has two zeroes:
$k_0=-q_0\pm\sqrt{t+m_k^2}$.

As $k_0+q_0>0$, it follows $k_0=-q_0+\sqrt{t+m_k^2}$.
Therefore, $k_0^2-t=q_0^2+m_k^2-2q_0\sqrt{t+m_k^2}$.

For our desired integral, we get
\[
\int_0^\infty dt \sqrt{t} \Theta(q_0^2+m_k^2-2q_0\sqrt{t+m_k^2}-(m_i+m_j)^2)\times
\]
\[
\times \frac{\sqrt{\lambda(q_0^2+m_k^2-2q_0\sqrt{t+m_k^2},m_i^2,m_j^2)}}{2(q_0^2+m_k^2-2q_0\sqrt{t+m_k^2})\sqrt{t+m_k^2}}.
\]
 The $\Theta$-distribution requires
 \[
 q_0^2+m_k^2-(m_i+m_j)^2\geq 2q_0\sqrt{t+m_k^2}.
 \]
Solving for $t$, we get
\[
t\leq \frac{\lambda(s,m_k^2,(m_i+m_j)^2)}{4s}
\]
As $t\geq 0$, we must have for the physical threshold $s>(m_k+m_i+m_j)^2$ (which is indeed completely symmetric under permutations of $i,j,k$, in accordance for what we expect for $\Im(\Phi_R(b_3)(s))$).
We then have
\[
\Im(J^{ij;3}_k)(s)=\int_0^{\frac{\lambda(s,m_k^2,(m_i+m_j)^2)}{4s}}\frac{\sqrt{\lambda(s+m_k^2-2\sqrt{s}\sqrt{t+m_k^2},m_i^2,m_j^2)}}{2(s+m_k^2-2\sqrt{s}\sqrt{t+m_k^2})\sqrt{t+m_k^2}}
\sqrt{t}dt.
\]
There is also a pseudo-threshold at $s<(m_k-m_i-m_j)^2$.

Note that the integrand vanishes at the upper boundary $\frac{\lambda(s,m_k^2,(m_i+m_j)^2)}{4s}$, and the integral has a pole at $s=0$ (see below) as for $s=0$ the integral would not converge. The integrand is positive definite in the interior of the integration domain and free of singularities.

The computation of $\Im(J^{ij;2}_k)(s)$ proceeds similarly
and gives
\[
\Im(J^{ij;2}_k)(s)=\int_{\frac{\lambda(s,m_k^2,(m_i+m_j)^2)}{4s}}^{\frac{\lambda(s,m_k^2,(m_i-m_j)^2)}{4s}}\frac{\sqrt{\lambda(s+m_k^2-2\sqrt{s}\sqrt{t+m_k^2},m_i^2,m_j^2)}}{2(s+m_k^2-2\sqrt{s}\sqrt{t+m_k^2})\sqrt{t+m_k^2}}
\sqrt{t}dt.
\]
The integrand vanishes at the upper and lower boundaries. The integrand is positive definite in the interior of the integration domain and free of singularities.

Most interesting is the computation of $\Im(J^{ij;1}_k)(s)$.
It gives
\[
\Im(J^{ij;1}_k)(s)=\int_{\frac{\lambda(s,m_k^2,(m_i-m_j)^2)}{4s}}^\infty \frac{\sqrt{\lambda(s+m_k^2-2\sqrt{s}\sqrt{t+m_k^2},m_i^2,m_j^2)}}{2(s+m_k^2-2\sqrt{s}\sqrt{t+m_k^2})\sqrt{t+m_k^2}}
\sqrt{t}dt.
\]
The integrand vanishes at the lower boundary $\frac{\lambda(s,m_k^2,(m_i-m_j)^2)}{4s}$, and the integral again has a pole at $s=0$. But now the integrand has a pole as $q_0^2+m_k^2-2q_0\sqrt{t+m_k^2}$ is only constrained to $\leq(m_i-m_j)^2$, and hence can vanish in the domain of integration.

This gives us a new variation apparent in the integration of the loop in the co-graph
\[
\mathrm{Var}(J^{ij;1}_k)(s)=\int\sqrt{\lambda(k^2,m_i^2,m_j^2)}\delta(k^2)\delta_+((k+q)^2-m_k^2)d^4k,
\]
which evaluates to
\[
\mathrm{Var}(J^{ij;1}_k)(s)=\overbrace{|m_i^2-m_j^2|}^{\sqrt{\lambda(0,m_i^2,m_j^2)}}\overbrace{\frac{|s-m_k^2|}{2s}\Theta(s-m_k^2)}^{\sqrt{\lambda(s,m_k^2,0)}}.
\]
Adding the contributions, we confirm our expectations Eq.(\ref{iijk}).
\section{Markings and monodromy}
\label{sec:10}
\vspace{1mm}\noindent
Consider the equivalence relation for $b_3$ in Outer Space.
$$\bthreeeq$$
The three possible choices for a spanning tree of $b_3$ result in three different but equivalent markings of $b_3$ regarded as a marked metric graph in (colored) Outer Space.\footnote{For the notion of equivalence in Outer Space refer to \cite{KV1}.} Each different choice corresponds to a different choice of basis for $H^1(b_3)$. The markings given in this picture determine all markings in subsequent picture, where they are omitted. 

The choice of a spanning tree together with an ordering of the roses then determines uniquely a single element in the set of ordered flags of subgraphs, and hence determines one iterated Feynman integral describing the amplitude in question.  

For their evaluation along principal sheets equality of these integrals follows by Fubini,
which gives equality along the principal sheet and  implies  an equivalence relation for evaluation along the non-principal sheets.

On the level of amplitudes, a basis for the fundamental group of the graph, provided by a marking, translates to a basis for the fundamental group for the complement of the threshold divisors of the graph.

Concretely, for the amplitude generated by $b_2$, this is trivial: we have a single generator for the one loop, and this maps to a generator for the monodromy  of the corresponding amplitude.

For $b_3$, we get two generators. A choice as which two edges form the subgraph
$b_2$  then determines the iterated integral. The equivalence of markings 
in Outer Space becomes the Fubini theorem  of iterated integrals for the evaluation along principal sheets, and the corresponding equivalence off principal sheets else.
$$\jewelb$$
Let us have a closer look at this corresponding cell in Outer Space. In the barycenter of the triangle we indicate the graph $b_3$. The green lines form the spine, connecting the graph at the barycenter  to the barycenters of the codimension-one edges of the triangle, which are cells marked by the indicated colored 2-petal roses.

The corners of the triangle are not part of Outer Space, as we are not allowed to shrink loops. In fact, they are blown up to arcs, which are cells populated by graphs for which the choice is obvious as to which two edges are the subgraphs - the corners are the intersections of two edge variables  becoming small as compared to the third. The three different iterated integrals are hence assigned to those arcs in a natural manner. 

For example for the lower left corner, the edge variable $A_b$ is much greater than the edge variables $A_r,A_y$. Along the arc, an equivalence relation operates as well, as the loop formed by edges $e_r,e_y$ can have either of the two edges as its spanning tree. The endpoints of these arcs form the vertices of 
the cell, which is a hexagon. To those vertices we assign roses as indicated, with one small and one big petal, which indicates an order on the petals. 

Note that moving along an arc can be regarded as movement in a fibre given by the chosen subgraph $b_2$, while moving the arc away or toward the barycenter of the triangle is movement  in the base.

Moving from one corner to another utilizes a non-trivial equivalence of our iterated integrals.

We indicate the markings only for some of the graphs, and only for the choice of the red edge as the spanning tree.

Let us have a still closer look at the corners:
$$\jewelpart$$
The equivalence relation is an equivalence relation for the two marked metric graphs, which is indeed coming from an equivalence relation for the two choices of a spanning tree for the 2-edge subgraph on the red and yellow edges, while the corresponding analytic expression is equal for both choices: $I_b^{ry}$.

Moving to a different corner by shrinking the size of the blue edge and increasing say the size of the red edge moves to a different corner while leaving the marking equal. This time we have an equivalence relation between the analytic expressions: 
\[
I_b^{ry}\sim I_r^{by}.
\]
Moving along an arc uses equivalence based on homotopy of the graph, moving along an edge leaves the marking equal, but uses equivalence of analytic expressions $I_{\Gamma/\gamma}^{\gamma}$, here $I_b^{ry}\sim I_r^{by}$ \footnote{In this example the cograph was always a single-edge tadpole whose spanning tree is a single vertex and therefore the equivalence relation from the 1-petal rose $R_1$ to the co-graph is in fact the identity. In general, the decomposition of a graph into a subgraph $\gamma$ and cograph $\Gamma/\gamma$ corresponds to a factorization into equivalence classes for the subgraph and equivalence classes for the cograph familiar from \cite{KV4}.}. 

The complete sheet structure including non-principal sheets is always rather subtle and is reflected by a jewelled space $\mathit{J}_2$ as we discuss now.

A crucial aspect of Outer Space is that cells combine to spaces, and that these spaces provide information, for example on the representation theory of the free group in the case of 
traditional Outer Space, and on the sheet structure of amplitudes in our case.
In particular, the bordification of Outer Space as studied by \cite{KV3}, motivates to glue the cell studied above to a 'jewel':
$$\jewelt $$
The Euclidean simplices are put in a Poincar\'e disk as hyperbolic triangles. We only give markings for a few graphs in the center. To not clutter the figure, we have not given the graphs for the vertices in this figure which are all marked ordered roses as indicated above, by a result of \cite{KV3}.
 
\vspace*{4mm}
\noindent
{\bf Acknowledgment.} It is a pleasure to thank Spencer Bloch for a long-standing collaboration and an uncountable number of discussions. Also, I enjoy to thank David Broadhurst, Karen Vogtmann and  Marko Berghoff for discussions, and the audiences at this 'elliptic conference', and at the Les Houches workshop on 'structures in local quantum field theory' for a stimulating atmosphere.   And thanks to Johannes Bl\"umlein for initiating this KMPB conference at DESY-Zeuthen!

\end{document}